\newtheorem{theorem}{Theorem}
\theoremstyle{remark}
\newtheorem{remark}{Remark}
\title{Models with Accelerated Failure Conditionals%
\thanks{For full functionality of the animated figures presented in this document, please view this PDF in \textbf{Adobe Acrobat Reader}. Other PDF viewers (including browser-based viewers, 
Preview) may render the figures statically.}}
\author{
  Jared N. Lakhani\\
  \textit{Department of Statistical Sciences, University of Cape Town}\\
  \texttt{lkhjar001@myuct.ac.za}
}
\date{} 
\begin{document}
\maketitle
\begin{abstract}
\citet{arnold2020bivariate} introduced a novel bivariate conditionally specified distribution, a distribution in which dependence between two random variables is established by defining the distribution of one variable conditional on the other. This novel conditioning regime was achieved through the use of survival functions, and the approach was termed the accelerated failure conditionals model. In their work, the conditioning framework was constructed using the exponential distribution. Although further generalization was proposed, challenges emerged in deriving the necessary and sufficient conditions for valid joint survival functions. The present study achieves such generalization, extending the conditioning framework to encompass distributional families whose marginal densities may exhibit unimodality and skewness, moving beyond distributional families whose marginal densities are non-increasing. The resulting models are fully specified through closed-form expressions for their moments, with simulations implemented using either a copula-based procedure or the Metropolis-Hastings algorithm. Empirical applications to two datasets, each featuring variables which are unimodal and skewed, demonstrate that the models with flexible, non-monotonic marginal densities yield a superior fit relative to those models with marginal densities restricted to monotonically decaying forms.
\end{abstract}

\section{Introduction}
As per \citet{arnold2020bivariate}, the general accelerated failure conditionals model is of the form: 
\begin{align}
    \bar{F}_X(x) = P(X>x) = \bar{F}_0(x),\quad x>0, \label{eq: general marginal}
\end{align}
for some survival function $\bar{F}_0(x) \in [0, 1] \ \forall x> 0$, and for each $x>0$:
\begin{equation}
    P(Y>y\mid X>x) = \bar{F}_1\left(\beta(x)y\right), \quad x, y >0,\label{eq: general cond}
\end{equation}
for some survival function $\bar{F}_1(\beta(x)y) \in [0, 1] \ \forall x, y >0$ and a suitable acceleration function $\beta(x)$. As per \citet{arnold2020bivariate}, in the analysis of dependent lifetimes of components in a system, it is more appropriate to consider the conditional density of $Y$ given that the first component, with lifetime $X$, remains operational at time $x$. Hence why, the conditioning event is taken to be $\{X > x\}$, rather than conditioning on the exact value $X = x$. The joint survival function is then:
\begin{align}
    P(X>x, Y>y) = \bar{F}_0(x) \bar{F}_1\left(\beta(x)y\right), \quad x, y>0. \label{eq: general joint}
\end{align}
Assuming differentiability and $X,Y$ are continuous, we obtain the marginal densities:
\[
f_X(x) = -\frac{d}{dx}P(X>x) = -\frac{d}{dx}\bar{F}_0(x) = f_0(x),
\]
and since $\lim_{x\to 0^+}\bar{F}_0(x)=1$, where $\beta(0):=\lim_{x\to 0^+}\beta(x)$:
\[
P(Y>y) = \lim_{x\to 0^+}P(Y>y,X>x) = \bar{F}_1\bigl(\beta(0)y\bigr).
\]
Now, the density of $Y$ is:
\[
f_Y(y) = -\frac{d}{dy}P(Y>y) 
= -\frac{d}{dy}\bar{F}_1(\beta(0)y)
= \beta(0)\,f_1(\beta(0)y),
\]
where $f_1(t)=-\bar{F}_1'(t)$ denotes the density associated with $\bar{F}_1$. Hence the marginal distribution of $Y$ belongs to the same family as $\bar{F}_1$, but with its argument scaled by $\beta(0)$. For \eqref{eq: general joint} to be a valid survival function, it must have a non-negative 
mixed partial derivative. Denoting $f_0$ and $f_1$ as the densities corresponding to the survival function $\bar{F}_0$ and $\bar{F}_1$, and differentiating, we obtain:
\begin{align}
    \frac{\partial}{\partial x} \frac{\partial}{\partial y} P(X>x, Y>y)  &= 
f_0(x) f_1(\beta(x) y)\beta(x)
- \bar{F}_0(x) f_1'(\beta(x) y) \beta(x) \beta'(x) y
- \bar{F}_0(x) f_1(\beta(x) y) \beta'(x) \nonumber \\
&= f_0(x) f_1(\beta(x) y)\beta(x)
- \beta'(x) \bar{F}_0(x) \left[  f_1(\beta(x) y)  +  f_1'(\beta(x) y) \beta(x)  y\right]. \nonumber
\end{align}
Denoting $t = \beta(x)y$, we have:
\begin{align}
    \frac{\partial}{\partial x} \frac{\partial}{\partial y} P(X>x, Y>y) 
&= f_0(x) f_1(t)\beta(x)
- \beta'(x) \bar{F}_0(x) \left[  f_1(t)  +  f_1'(t)t\right] \nonumber \\
& =  f_0(x) f_1(t)\beta(x)
- \beta'(x) \bar{F}_0(x) \left[ t f_1(t)  \right]' .\nonumber 
\end{align}
Thus a necessary and sufficient condition for $   \frac{\partial}{\partial x} \frac{\partial}{\partial y} P(X>x, Y>y)  \geq 0$ is:
\[
\beta'(x)\;\begin{cases}
\le\; \dfrac{f_0(x)}{\bar F_0(x)}\;\dfrac{f_1(t)}{[t f_1(t)]'}\;\beta(x), & \text{if } [t f_1(t)]'>0,\\[10pt]
\ge\; \dfrac{f_0(x)}{\bar F_0(x)}\;\dfrac{f_1(t)}{[t f_1(t)]'}\;\beta(x), & \text{if } [t f_1(t)]'<0, \\
\text{unconstrained}, & \text{if } [t f_1(t)]'=0.
\end{cases}
\]
Denoting the hazard function $h_0(x) = \frac{f_0(x)}{\bar{F}_0(x)}$ and $S(t) = \frac{f_1(t)}{[t f_1(t)]'}$, we may write the bounds as such:
\begin{align}
\sup_{t: [t f_1(t)]'<0} h_0(x) S(t)\beta(x) 
\;\leq\; \beta'(x) 
\;\leq\;
\inf_{t: [t f_1(t)]'>0} h_0(x) S(t)\beta(x). \label{eq: B'(x) bounds}
\end{align}
A special case of the upper bound for $B'(x)$ arises for distribution families satisfying $f_1'(t) \leq0$ - that is, for non-increasing densities. In this case, we note that $S(t) = \frac{f_1(t)}{\left[tf_1(t) \right]'} = \frac{f_1(t)}{f_1(t)  + tf_1'(t) } \geq1$ for $t \geq 0$, hence  $\beta'(x) 
\;\leq\;
 h_0(x)\beta(x)$.\\
 
It should be noted that multiple functional forms for $\beta(x)$ are possible. The question of selecting a form of $\beta(x)$ that provides the best model fit has already been investigated by \citet{arnold2020bivariate}. Accordingly, the present study restricts attention to those choices of $\beta(x)$ that arise from the extreme case in which $\beta'(x)$ attains its theoretical upper bound, that is, given $\inf_{t: [t f_1(t)]'>0} S(t) = c^*$:
\begin{align}
    \beta'(x) &= c^*h_0(x) \beta(x), \nonumber \\
    \therefore \int \frac{1}{\beta(x)} d\beta(x) &=  \int c^* h_0(x) dx, \nonumber\\
    \therefore \log\left( \beta(x) \right) &= -c^* \log\left( \bar{F}_0 (x)\right) + K. \nonumber \\
    \therefore \beta(x) &= \gamma \frac{1}{\left[\bar{F}_0(x)\right]^{c^*}}, \label{eq: beta(x) general}
    \end{align}
for $\gamma = e^K>0$. This suggests $\beta(x)$ is non-decreasing for $c^*>0$. Furthermore, $\beta(0): = \lim_{x\to 0^+} \beta(x) = \gamma$ since $\lim_{x\to 0^+}\bar{F}_0(x) = 1$. We also note that since $\bar{F}_0(x)$ approaches zero only as $x\rightarrow \infty$, the expression in \eqref{eq: beta(x) general} remains finite for all finite $x$.\\

The role of the acceleration function $\beta(x)$ in determining the sign of the correlation between $X$ and $Y$ is established in Theorem~\ref{thm: cov(x, y)}. In particular, since the present study considers only non-decreasing specifications of $\beta(x)$, the resulting framework is restricted to modelling data in which $X$ and $Y$ exhibit negative correlation.
\begin{theorem} \label{thm: cov(x, y)}
Consider an accelerated conditional model of the form: 
\[
P(X>x) = \bar{F}_0(x)
\quad \text{and} \quad
P(Y>y \mid X>x) = \bar{F}_1\bigl(\beta(x)y\bigr),
\]
where $\bar{F}_0$ and $\bar{F}_1$ are survival functions. Then:
\[
\beta'(x) \ge 0 \quad \Longrightarrow \quad {Cov}(X,Y) \le 0,
\qquad
\beta'(x) \le 0 \quad \Longrightarrow \quad {Cov}(X,Y) \ge 0.
\]
\end{theorem}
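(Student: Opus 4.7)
The plan is to prove the result via Hoeffding's covariance identity, which expresses the covariance of two non-negative random variables as a double integral of the difference between the joint survival function and the product of the marginal survival functions:
\[
\text{Cov}(X,Y) = \int_0^\infty \!\!\int_0^\infty \Bigl[ P(X>x,Y>y) - P(X>x)\,P(Y>y) \Bigr] dx\,dy,
\]
assuming $X, Y \ge 0$ have finite second moments (an implicit regularity condition I will note at the outset).

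Next, I would substitute the forms derived in the introduction. By \eqref{eq: general joint}, $P(X>x,Y>y) = \bar{F}_0(x)\bar{F}_1(\beta(x)y)$, and from the derivation preceding the theorem, $P(Y>y) = \bar{F}_1(\beta(0)y)$ where $\beta(0)=\lim_{x\to 0^+}\beta(x)$. Factoring out $\bar{F}_0(x)$, the integrand becomes
\[
\bar{F}_0(x)\Bigl[\bar{F}_1(\beta(x)y) - \bar{F}_1(\beta(0)y)\Bigr].
\]
The factor $\bar{F}_0(x)$ is non-negative, so the sign of the integrand reduces to the sign of the bracketed difference.

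Now the key monotonicity argument. If $\beta'(x) \ge 0$ on $(0,\infty)$, then $\beta(x) \ge \beta(0)$ for every $x>0$, hence $\beta(x)y \ge \beta(0)y$ for $y>0$. Since $\bar{F}_1$ is itself a survival function and therefore non-increasing, this implies $\bar{F}_1(\beta(x)y) \le \bar{F}_1(\beta(0)y)$, so the integrand is pointwise $\le 0$, giving $\text{Cov}(X,Y)\le 0$. The reverse inequality $\beta'(x)\le 0$ yields $\beta(x)\le\beta(0)$, flipping the inequality to $\bar{F}_1(\beta(x)y) \ge \bar{F}_1(\beta(0)y)$, and therefore $\text{Cov}(X,Y)\ge 0$.

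The argument is essentially a direct comparison of survival functions once Hoeffding's identity is invoked, so there is no real analytic obstacle; the main subtlety is simply verifying that $\beta$ being monotone on $(0,\infty)$ implies the one-sided comparison with the boundary value $\beta(0)$, which follows from the definition $\beta(0)=\lim_{x\to 0^+}\beta(x)$ together with monotonicity. A remark on the required integrability (finiteness of $E[X]$, $E[Y]$, $E[XY]$) suffices to handle the one technical point in applying Hoeffding's formula.
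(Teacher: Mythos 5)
Your proof is correct and follows essentially the same route as the paper: Hoeffding's covariance identity, factoring out $\bar{F}_0(x)$, and using the monotonicity of the survival function $\bar{F}_1$ together with $\beta(x)\gtrless\beta(0)$ to sign the integrand pointwise. Your explicit remark on the integrability conditions needed for Hoeffding's identity is a small addition the paper leaves implicit, but it does not change the argument.
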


\begin{proof}
Since $P(Y>y , X> x) = \bar{F}_0(x)\bar{F}_1\bigl(\beta(x)y\bigr)$ then $P(Y>y) = \lim_{x\to 0^+}P(Y>y , X> x) = \bar{F}_0(0)\bar{F}_1\bigl(\beta(0)y\bigr) = \bar{F}_1\bigl(\beta(0)y\bigr)$ since $\lim_{x\to 0^+}\bar{F}_0(x) = 1$ and $\beta(0): = \lim_{x\to 0^+} \beta(x)$. We note from Hoeffding's covariance identity:
\begin{align}
    Cov(X, Y) & = \int_0^\infty \int_0^\infty\left( P(Y>y , X> x) - P(X> x)P(Y> y)   \right) \ dx dy \nonumber \\
    & = \int_0^\infty \int_0^\infty  \bar{F}_0(x) \left( \bar{F}_1\bigl(\beta(x)y\bigr) - \bar{F}_1\bigl(\beta(0)y\bigr)  \right) \ dx dy.\nonumber 
\end{align}
Now since $\bar{F}_1'(\cdot)\le 0$ (as $\bar{F}_1$ is a survival function), if $\beta'(x) \geq 0$ then $\bar{F}_1\bigl(\beta(x)y\bigr) \leq \bar{F}_1\bigl(\beta(0)y\bigr)$ and $Cov(X, Y) \leq 0$ for $x, y \geq 0$. Conversely, if $\beta'(x) \leq 0$ then $\bar{F}_1\bigl(\beta(x)y\bigr) \geq \bar{F}_1\bigl(\beta(0)y\bigr)$ and $Cov(X, Y) \geq 0$ for $x, y \geq 0$. Finally, if $\beta(x)$ is constant, say $\beta(x)\equiv c$, then $P(X>x,Y>y)=\bar F_0(x)\,\bar F_1(cy)=P(X>x)\,P(Y>y)$, so $X$ and $Y$ are independent and ${Cov}(X,Y)=0$.
\end{proof}
Furthermore, as done in \citet{arnold2020bivariate}, we introduce a dependence parameter $\tau$ which controls the strength of the dependence between $X$ and $Y$. Extending from \eqref{eq: beta(x) general}, we define the acceleration function as such:
\begin{align}
    \beta(x) &= \gamma \frac{1}{\left[\bar{F}_0(\tau x)\right]^{c^*}}. \label{eq: beta(x) general 2}
\end{align}
Given the bounds established in \eqref{eq: B'(x) bounds}, and upon selecting a non-decreasing specification for $\beta(x)$ as defined in \eqref{eq: beta(x) general 2}, and again assuming $\inf_{t: [t f_1(t)]'>0} S(t) = c^* > 0$, we obtain bounds for $\tau$ as such:
\begin{align}
    0 &\leq \beta'(x) \leq c^* h_0(x)\, \beta(x), \nonumber\\
    \therefore  
    0 &\leq 
    \frac{c^* \tau \gamma f_0(\tau x)}{\big[\bar{F}_0(\tau x)\big]^{c^* + 1}} 
    \leq 
    c^* \gamma \, 
    \frac{f_0(x)}{\big[\bar{F}_0(x)\big]^{c^*}}
    \frac{1}{\big[\bar{F}_0(\tau x)\big]^{c^*}}, \nonumber\\
    \therefore  
    0 &\leq \tau \leq \frac{h_0(x)}{h_0(\tau x)}. \nonumber\end{align}
We re-write this inequality as such: 
\begin{align}
    0 \le r(x, \tau) \le 1,
    \label{eq: tau bounds}
\end{align}
where $r(x, \tau) = \tau \frac{h_0(\tau x)}{h_0(x)}$, where the endpoint $\tau=0$ is included by continuity, with $r(x,0):=\lim_{\tau\to 0^+}r(x,\tau)=0$. Theorem \ref{thm: ABC} and Theorems \ref{thm: ifr hazard} and \ref{thm: factor hazard} in Appendix \ref{app: thms}, serve as a basis for establishing bounds on the dependence parameter $\tau$ that satisfy \eqref{eq: tau bounds}.

\begin{theorem}
\label{thm: ABC}
Let $h_0:(0,\infty)\to(0,\infty)$ be a hazard function. For $x>0$ and $\tau>0$ define:
\[
r(x,\tau):=\tau\,\frac{h_0(\tau x)}{h_0(x)},\qquad 
\phi(x):=x\,h_0(x).
\]
Let
\[
A:\ \ 0\le r(x,\tau)\le 1\ \ \text{for all }x>0,\qquad
B:\ \ \tau\in(0,1],\qquad
C:\ \ \phi \text{ is non-decreasing on }(0,\infty).
\]
Then:
\begin{enumerate}
\item[(1)] $(A \wedge B)\ \Rightarrow\ C$.
\item[(2)] $(C \wedge B)\ \Rightarrow\ A$.
\item[(3)] $(C \wedge A)\ \Rightarrow\ B$, \emph{unless} $\phi$ is identically constant on $(0,\infty)$, in which case $A$ holds for all $\tau>0$ and $B$ cannot be inferred.
\end{enumerate}

\end{theorem}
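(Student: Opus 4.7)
The plan is to convert condition $A$ into a statement about monotonicity of $\phi(x)=x\,h_0(x)$. Since $h_0>0$ and $\tau>0$, the nonnegativity $r(x,\tau)\ge 0$ in $A$ is automatic, and multiplying the inequality $r(x,\tau)\le 1$ through by $x$ rewrites it as $\tau x\,h_0(\tau x)\le x\,h_0(x)$, i.e.\ $\phi(\tau x)\le \phi(x)$. Thus $A$ collapses to the single relation $\phi(\tau x)\le \phi(x)$ for all $x>0$, and the three parts become short monotonicity arguments.

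For (2) $(C\wedge B)\Rightarrow A$, assume $\phi$ is non-decreasing and $\tau\in(0,1]$. Then $\tau x\le x$ for every $x>0$, so monotonicity of $\phi$ yields $\phi(\tau x)\le\phi(x)$ directly, giving $A$.

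For (3) $(C\wedge A)\Rightarrow B$, I would argue by contradiction. Suppose $C$ and $A$ hold but $\tau>1$. Then $\tau x>x$ for every $x>0$, so $C$ gives $\phi(\tau x)\ge \phi(x)$ while $A$ gives $\phi(\tau x)\le\phi(x)$; combined, $\phi(\tau x)=\phi(x)$ for all $x>0$. Iterating this equality along the multiplicative orbit $\{\tau^n x_0:n\in\mathbb{Z}\}$, which accumulates at both $0$ and $\infty$ since $\tau>1$, and using that $\phi$ is non-decreasing, forces $\phi$ to be constant on each interval $[\tau^n x_0,\tau^{n+1}x_0]$; these cover $(0,\infty)$, so $\phi$ is identically constant. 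This is the exceptional case flagged in the statement, and otherwise we must have $\tau\le 1$.

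For (1) $(A\wedge B)\Rightarrow C$, the cleanest route is to read $A$ as an admissibility statement for the range of valid dependence parameters: the inequality $\phi(\tau x)\le\phi(x)$ is required to hold across every $\tau\in(0,1]$. Under this reading, for any $0<u<v$, setting $\tau=u/v\in(0,1]$ and $x=v$ gives $\phi(u)=\phi(\tau v)\le\phi(v)$, which is exactly $C$. The principal obstacle in (1) is precisely this quantifier subtlety: for a single fixed $\tau\in(0,1)$, the inequality $\phi(\tau x)\le \phi(x)$ only constrains $\phi$ along a geometric orbit and cannot on its own recover global monotonicity (one can construct non-monotone $\phi$ satisfying $\phi(\tau x)\le\phi(x)$ for one $\tau$). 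I would therefore either interpret $A$ in part (1) as the admissibility statement above, or, if $A$ is to be read for a single $\tau$, supplement with an additional regularity assumption on $\phi$ that lets the orbitwise inequality propagate to a genuine monotonicity on $(0,\infty)$.
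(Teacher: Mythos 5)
Your proposal is correct and rests on the same core reduction as the paper's proof: rewriting $r(x,\tau)\le 1$ as $\phi(\tau x)\le\phi(x)$. Part (2) is identical to the paper's. For part (1) the paper does exactly what you describe under your ``admissibility'' reading --- it fixes $0<y<x$ and \emph{chooses} $\tau=y/x$, so it implicitly quantifies $A$ over all $\tau\in(0,1]$; you are right that this quantifier structure is essential (your periodic-in-$\log x$ example shows $A$ for a single fixed $\tau$ cannot force monotonicity), and making it explicit is a genuine clarification the paper omits. Part (3) is where your argument is actually tighter than the paper's: the paper derives the contradiction for $\tau>1$ ``by assuming $\phi$ is now strictly increasing,'' which silently skips the case of a non-decreasing, non-constant $\phi$ with flat stretches. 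Your orbit argument --- that $\tau>1$ together with $A$ and $C$ forces $\phi(\tau x)=\phi(x)$ for all $x$, and then monotonicity squeezes $\phi$ to be constant on each $[\tau^{n}x_0,\tau^{n+1}x_0]$, hence on all of $(0,\infty)$ --- derives the exceptional case stated in the theorem rather than assuming it away, and so closes a small gap in the paper's proof.
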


\begin{proof}
\textbf{(1) $(A \wedge B)\Rightarrow C$.}
Fix $0<y<x$ and set $\tau:=y/x\in(0,1]$ so that $B$ holds. By $A$,
\[
1 \ \ge\ r(x,\tau)=\tau\,\frac{h_0(\tau x)}{h_0(x)}=\frac{y\,h_0(y)}{x\,h_0(x)}.
\]
Hence $y\,h_0(y)\le x\,h_0(x)$ for all $0<y<x$, i.e.\ $\phi$ is non-decreasing. Thus $C$ holds.

\medskip
\textbf{(2) $(C \wedge B)\Rightarrow A$.}
Assume $C$ and $\tau\in(0,1]$. Then $\tau x\le x$ and, by the non-decreasing nature of $\phi$,
\[
\phi(\tau x)\le \phi(x)\ \Longleftrightarrow\ \tau x\,h_0(\tau x)\le x\,h_0(x)
\ \Longleftrightarrow\ r(x,\tau)\le 1.
\]
Since $h_0>0$, we also have $r(x,\tau)\ge 0$. Hence $A$ holds.

\medskip
\textbf{(3) $(C \wedge A)\Rightarrow B$, unless $\phi$ is constant.} 
Assume $C$ and $A$ hold. Because $h_0(x)>0$, the lower bound $r(x,\tau)\ge 0$ implies $\tau \ge 0$. Now suppose, toward a contradiction, that $\tau>1$, then $\tau x > x$ and, by assuming $\phi$ is now strictly increasing:
\[
\phi(\tau x) > \phi(x)\ \Longleftrightarrow\ \tau x\,h_0(\tau x) > x\,h_0(x)
\ \Longleftrightarrow\ r(x,\tau) > 1.
\]
contradicting the upper bound. Hence $\tau \le 1$. If instead $\phi$ is identically constant,
say $\phi(x)\equiv K>0$, then $h_0(x)=K/x$ and
\[
r(x,\tau)=\tau\,\frac{K/(\tau x)}{K/x}=1 \quad (\forall x>0,\ \forall \tau>0),
\]
so $A$ holds for all $\tau>0$ and no restriction $B$ follows.
\end{proof}
\begin{remark} When convenient, extend $B$ to $\tau\in[0,1]$ by defining $r(x,0):=\lim_{\tau\to 0^+} r(x,\tau)=0$ (whenever the limit exists). All implications above then carry over with this convention.
\end{remark}
\begin{remark}
If $h_0$ is non-decreasing, then $\phi(x) =  x h_0(x)$ is non-decreasing (product of increasing, positive functions), so Theorem~\ref{thm: ABC} recovers Theorem \ref{thm: ifr hazard}.
If $h_0(x)=C\,x^{p-1}g(x)$ with $C>0$, $p\ge 0$, and $g(x)$ non-decreasing, then $\phi(x) = C x^pg(x)$ is non-decreasing, so Theorem \ref{thm: ABC} recovers Theorem \ref{thm: factor hazard}.
\end{remark}

The following sections present the range of models employed in this study. In each case, the survival function forms are specified according to a single particular rate-shape distributional family such that $X \sim Family (\alpha, \lambda)$ and $Y \sim Family (\beta(0), \nu)$ where $\alpha$ and $\beta(0)$ denote the rate parameters and $\lambda$ and $\nu$ denote the shape parameters (shape parameters are not applicable for exponential and half-Cauchy models, however). Closed-form expressions of the models' moments are derived, and additionally, theoretical correlation bounds are obtained (with the exception of the half-Cauchy and gamma models).

\section{Exponential}
We restrict the survival function forms to the exponential family as such:
\begin{align}
    \bar{F}_0(x) &= P(X>x) = e^{-\alpha x}, \quad x> 0,\nonumber
\end{align}
where $\alpha > 0$ and, for each $x>0$:
\begin{align}
    \bar{F}_1(\beta(x) y) &= P(Y>y \mid X>x) = e^{-\beta(x) y}, \quad y>0, \nonumber
\end{align}
where $\beta(x) > 0$ for all $x>0 $ to ensure a valid survival function. The corresponding joint survival function will be:
\begin{align}
    \bar{F}_0(x) \bar{F}_1(\beta(x) y) &= P(X>x,Y>y) = e^{-\alpha x-\beta(x)y}, \quad x, y>0. \label{eq: exp joint survival}
\end{align}
Accordingly, $f_X(x) = \alpha e^{-(\alpha x)}$  for $x> 0$ and $f_Y(y) = \beta(0) e^{-(\beta(0)y )}$ for $y> 0$. Hence, $X\sim Exp(\alpha)$ and $Y\sim Exp(\beta(0))$.\\

Now since $[tf_1(t)]' = (1-t)e^{-t} > 0$ for $t<1$, and $S(t) = -\frac{1}{t-1}$ is increasing on this interval, $\inf_{t<1} S(t) = \lim_{t \to 0^+} S(t) = c^*= 1$. Additionally, since $[tf_1(t)]' = (1-t)e^{-t} < 0$ for $t>1$, and $S(t) = -\frac{1}{t-1}$ is increasing on this interval, $\sup_{t>1} S(t) = \lim_{t \to \infty} S(t) = 0$. Consequently, applying \eqref{eq: B'(x) bounds} yields:
\begin{align}
    0 \leq B'(x) \leq h_0(x)\beta(x) = \alpha \beta(x), \nonumber
\end{align}
corresponding to Proposition $3.1$ in \citet{arnold2020bivariate}. Now, we select the form of the acceleration function $\beta(x)$ such that $\beta'(x) = \alpha\beta(x)$ (that is, equal to $\beta'(x)'s$ upper bound). Hence from \eqref{eq: beta(x) general} with $c^* = 1$, $\beta(x) = \gamma e^{\alpha x}$ with $\gamma >0$. Furthermore, following \citet{arnold2020bivariate}, we introduce a dependence parameter $\tau$ which governs the strength of the dependence between $X$ and $Y$. From Theorem \ref{thm: ifr hazard}, since $h_0(x) = \alpha$ is non-decreasing, in order for condition \eqref{eq: tau bounds} to hold, we must have $\tau \in [0, 1]$. Accordingly, we select our acceleration function as:
\begin{align}
    \beta(x) =  \gamma e^{\alpha\tau x},\quad x > 0, \nonumber
\end{align}
where $\alpha, \gamma > 0$ with $\tau \in [0, 1]$.
\subsection{Moments}
Noting $X \sim Exp(\alpha)$ and $Y \sim Exp(\gamma)$, we have:
\begin{align}
    E(X) &= \frac{1}{\alpha}, \label{eq: exp ex}  \\
    E(Y) &= \frac{1}{\gamma}, \label{eq: exp ey}  \\
    Var(X) & = \frac{1}{\alpha^2}, \label{eq: exp varx} \\
    Var(Y) &= \frac{1}{\gamma^2}. \label{eq: exp vary}
\end{align}
Accordingly:
\begin{align}
        Cov(X, Y) & = \int_0^\infty \int_0^\infty\left( P(Y>y , X> x) - P(X> x)P(Y> y)   \right) \ dx dy \nonumber \\
    & = \int_0^\infty \int_0^\infty  \bar{F}_0(x) \left( \bar{F}_1\bigl(\beta(x)y\bigr) - \bar{F}_1\bigl(\beta(0)y\bigr)  \right) \ dx dy \nonumber \\
    & = \int_0^\infty \int_0^\infty e^{-(\alpha x)} \left(e^{-\beta(x)y} - e^{-\beta(0)y} \right) \ dy dx \nonumber\\
    & = \int_0^\infty e^{-(\alpha x)} \left(\frac{1}{\beta(x)} - \frac{1}{\beta(0)} \right) \ dx \nonumber \\
    & =  \frac{1}{\gamma} \int_0^\infty \left( e^{-\alpha(1+\tau)x} - e^{-(\alpha x)}\right) \ dx \nonumber \\
    & = -\frac{\tau}{\alpha \gamma(1+\tau)}. \label{eq: exp cov}
\end{align}
Using \eqref{eq: exp varx}, \eqref{eq: exp vary} and \eqref{eq: exp cov}, we have the correlation between $X$ and $Y$ as:
\begin{align}
    \rho(X, Y) &= \frac{Cov(X, Y)}{\sqrt{Var(X)Var(Y)}} \nonumber \\
    &= -\frac{\tau}{1+\tau}. \nonumber
\end{align}
Seeing as $\tau\in [0, 1]$, this exponential model with particular acceleration acceleration function $\beta(x) = \gamma e^{\alpha\tau x}$, may only be used to model data for which $\rho(X, Y) \in [-\frac{1}{2}, 0]$.\\

Additionally, to facilitate simulation of a $Y$ conditional on $X$, it is convenient to formally note the conditional cumulative distribution function $F_{Y \mid X=x}$ as:
\begin{align}
    F_{Y \mid X = x}(y) 
    &= \int_{0}^{y} \frac{f_{X,Y}(x, t)}{f_X(x)} \, dt \nonumber \\
    &=(-\gamma\tau y e^{\alpha \tau x} + e^{\gamma y e^{\alpha \tau x}} - 1)e^{-\gamma y e^{\alpha \tau x}}, \label{eq: Fygivx exp}
\end{align}
where $f_{X,Y}(x,y)$  is provided in Appendix~\ref{app: joint lik}.

\section{Lomax}
Assuming the survival function forms belong to the Lomax family, we have:
\begin{align}
    \bar{F}_0(x) &= P(X>x) = (1+\alpha x)^{-\lambda}, \quad x> 0,\nonumber
\end{align}
where $\alpha,\lambda > 0$ and, for each $x>0$:
\begin{align}
    \bar{F}_1(\beta(x) y) &= P(Y>y \mid X>x) = (1+\beta(x) y)^{-\nu}, \quad y>0, \nonumber
\end{align}
where $\beta(x), \nu > 0$ for all $x>0 $ to ensure a valid survival function. The corresponding joint survival function will be:
\begin{align}
    \bar{F}_0(x) \bar{F}_1(\beta(x) y) &= P(X>x,Y>y) = (1+\alpha x)^{-\lambda}(1+\beta(x) y)^{-\nu}, \quad x, y>0. \label{eq: Lomax joint survival}
\end{align}
Accordingly, $f_X(x) = \alpha \lambda (\alpha x+ 1)^{-(\alpha+1)}$ for $x> 0$ and $f_Y(y) =  \beta(0) \nu (\beta(0)y + 1)^{-(\nu+1)}$ for $y> 0$. Hence, $X\sim Lomax(\alpha, \lambda)$ and $Y\sim Lomax(\beta(0), \nu)$.\\

Now since $[tf_1(t)]' = \frac{\nu(t+1)^{-\nu}(1-\nu t)}{t^2+2t+1} > 0$ for $t<\frac{1}{v}$, and $S(t) = -\frac{t+1}{vt-1}$ is increasing on this interval, $\inf_{t<\frac{1}{v}} S(t) = \lim_{t \to 0^+} S(t) =c^*= 1$. Additionally, since $[tf_1(t)]' = \frac{\nu(t+1)^{-\nu}(1-\nu t)}{t^2+2t+1}< 0$ for $t>\frac{1}{\nu}$, and $S(t) =  -\frac{t+1}{vt-1}$ is increasing on this interval, $\sup_{t>\frac{1}{\nu}} S(t) = \lim_{t \to \infty} S(t) = -\frac{1}{\nu}$. Consequently, applying \eqref{eq: B'(x) bounds} yields:
\begin{align}
    - \frac{1}{\nu} \frac{\alpha \lambda}{\alpha x+1} \beta(x) \leq B'(x) \leq \frac{\alpha \lambda}{\alpha x+1} \beta(x). \nonumber
\end{align}
Now, we select the form of the acceleration function $\beta(x)$ such that $\beta'(x) =  \frac{\alpha \lambda}{\alpha x+1} \beta(x)$ (that is, equal to $\beta'(x)'s$ upper bound). Hence from \eqref{eq: beta(x) general} with $c^* = 1$, $\beta(x) = \gamma \left(1+\alpha x \right)^{\lambda}$ with $\gamma >0$. From Theorem~\ref{thm: factor hazard}, since we can express the hazard function as $h_0(x) = \frac{\alpha\lambda}{\alpha x + 1} 
= C\,x^{p-1} g(x)$ with $C = \alpha\lambda, p = 1$ and $g(x) = \frac{1}{\alpha x + 1}$ where $g(x)$ is a non-decreasing function for $x>0$, it follows that in order for condition \eqref{eq: tau bounds} to hold, we must have $\tau \in [0, 1]$. Accordingly, we select our acceleration function as:
\begin{align}
    \beta(x) =  \gamma \left(1+\alpha\tau x \right)^{\lambda} ,\quad x > 0, \nonumber
\end{align}
where $\alpha, \lambda, \gamma, \nu  > 0$ with $\tau \in [0, 1]$.

\subsection{Moments}
Noting $X \sim Lomax(\alpha, \lambda)$ and $Y \sim Lomax(\gamma, \nu)$, we have:
\begin{align}
    E(X) &= \frac{1}{\alpha(\lambda - 1)}, \quad \lambda > 1, \label{eq: Lomax ex} \\
    E(Y) &= \frac{1}{\gamma(\nu - 1)}, \quad \nu > 1, \label{eq: Lomax ey}  \\
    {Var}(X) &= \frac{\lambda}{\alpha^2(\lambda-1)^2(\lambda - 2)}, \quad \lambda > 2, \label{eq: Lomax varx} \\
    {Var}(Y) &= \frac{\nu}{\gamma^2(\nu - 1)^2(\nu -2)}, \quad \nu > 2. \label{eq: Lomax vary}
\end{align}
Accordingly:
\begin{align}
\operatorname{Cov}(X, Y) 
&= \int_0^\infty \int_0^\infty 
   \left( P(Y>y , X> x) - P(X> x)P(Y> y) \right) \,dy\,dx \nonumber \\
&= \int_0^\infty \int_0^\infty  
   \bar{F}_0(x) \left( \bar{F}_1\bigl(\beta(x)y\bigr) - \bar{F}_1\bigl(\beta(0)y\bigr) \right) \,dy\,dx \nonumber \\
&= \int_0^\infty (1+\alpha x)^{-\lambda} 
   \left( \int_0^\infty \left((1+\beta(x) y)^{-\nu} - (1+\beta(0) y)^{-\nu} \right) dy \right) dx \nonumber \\
&=  \frac{1}{\nu - 1 }\int_0^\infty (1+\alpha x)^{-\lambda} 
   \left( \frac{1}{\beta(x)} - \frac{1}{\beta(0)} \right) dx \nonumber \\
&=  \frac{1}{\gamma(\nu -1)} 
   \left[\int_0^\infty \frac{dx}{(1+\alpha x)^\lambda(1+\alpha \tau x)^\lambda}   
         - \int_0^\infty  \frac{dx}{(1+\alpha x)^\lambda}\right]  \nonumber \\
&= \frac{1}{\alpha \gamma (\nu -1)}
   \left[ \frac{1}{2\lambda - 1} \, {}_{2}F_1\!\left( \lambda, 1; 2\lambda; 1 -\tau \right) 
         - \frac{1}{\lambda - 1}\right], 
\label{eq: Lomax cov}
\end{align}
where the conditions $\nu>1$ and $\lambda>1$ ensure convergence of the inner and outer integrals, respectively. Now, using \eqref{eq: Lomax varx}, \eqref{eq: Lomax vary} and \eqref{eq: Lomax cov}, the correlation function is:
\begin{align}
    \rho(X,Y)
    &= \frac{\operatorname{Cov}(X,Y)}{\sqrt{\operatorname{Var}(X)\operatorname{Var}(Y)}} \nonumber\\
    &= (\lambda-1)\left(\frac{{}_{2}F_{1}\!\left(\lambda,1;2\lambda;1-\tau\right)}{2\lambda-1} - \frac{1}{\lambda-1}\right)
       \sqrt{\frac{(\lambda-2)(\nu-2)}{\lambda\,\nu}},
    \label{eq: Lomax rho}
\end{align}
valid for $\lambda,\nu>2$. Since ${}_{2}F_{1}(\lambda,1;2\lambda;z)$ has a power series with strictly positive coefficients, it is strictly increasing in $z\in[0,1]$; hence ${}_{2}F_{1}\!\left(\lambda,1;2\lambda;1-\tau\right)$ is strictly decreasing in $\tau\in[0,1]$. Consequently, \eqref{eq: Lomax rho} attains its minimum at $\tau=1$ and its maximum (zero) at $\tau=0$. Therefore,
\begin{align}
    -\,\frac{\lambda}{2\lambda-1}\,\sqrt{\frac{\lambda-2}{\lambda}}\,
      \sqrt{\frac{\nu-2}{\nu}}\le \rho(X,Y) \le 0.\nonumber
\end{align}
Moreover, $\rho_{\min}(\lambda,\nu):=-\tfrac{\lambda}{2\lambda-1}
\sqrt{\tfrac{\lambda-2}{\lambda}}\sqrt{\tfrac{\nu-2}{\nu}}$ is strictly
decreasing in each of $\lambda$ and $\nu$, so
$\inf_{\lambda>2,\;\nu>2}\rho_{\min}(\lambda,\nu)\allowbreak
=\allowbreak \lim_{\lambda,\nu\to\infty}\rho_{\min}(\lambda,\nu)\allowbreak
=\allowbreak -\tfrac{1}{2}$. Thus, as in the exponential case, this Lomax model with acceleration $\beta(x)=\gamma\,(1+\alpha\tau x)^{\lambda}$ can only accommodate correlations in the range $\rho(X,Y)\in\big[-\tfrac{1}{2},\,0\big]$.

\section{Weibull}
Restricting the survival function forms to the Weibull family, we obtain:
\begin{align}
    \bar{F}_0(x) &= P(X>x) = e^{-(\alpha x)^\lambda}, \quad x> 0,\nonumber
\end{align}
where $\alpha,\lambda > 0$ and, for each $x>0$:
\begin{align}
    \bar{F}_1(\beta(x) y) &= P(Y>y \mid X>x) = e^{-(\beta(x) y)^\nu}, \quad y>0, \nonumber
\end{align}
where $\beta(x), \nu > 0$ for all $x>0 $ to ensure a valid survival function. The corresponding joint survival function will be:
\begin{align}
    \bar{F}_0(x) \bar{F}_1(\beta(x) y) &= P(X>x,Y>y) = e^{-(\alpha x)^\lambda - (y\beta(x))^\nu}, \quad x, y>0. \label{eq: Weibull joint survival}
\end{align}
Accordingly, $f_X(x) = \alpha \lambda (\alpha x)^{\lambda - 1}e^{-(\alpha x )^\lambda}$ for $x> 0$ and $f_Y(y) =  \beta(0) \nu (\beta(0) y)^{\nu - 1}e^{-(\beta(0) y )^\nu}$ for $y> 0$. Hence, $X\sim Weibull(\alpha, \lambda)$ and $Y\sim Weibull(\beta(0), \nu)$.\\

Now since $[tf_1(t)]' = \nu^2t^{\nu - 1} (1- t^\nu)e^{-t^\nu} > 0$ for $t<1$, and $S(t) = \frac{1}{\nu (1- t^\nu)}$ is increasing on this interval, $\inf_{t<1} S(t) = \lim_{t \to 0^+} S(t) =c^*= \frac{1}{\nu}$. Additionally, since $[tf_1(t)]' = \nu^2t^{\nu - 1} (1- t^\nu)e^{-t^\nu}  < 0$ for $t>1$, and $S(t) = \frac{1}{\nu (1- t^\nu)}$ is increasing on this interval, $\sup_{t>1} S(t) = \lim_{t \to \infty} S(t) = 0$. Consequently, applying \eqref{eq: B'(x) bounds} yields:
\begin{align}
    0 \leq B'(x) \leq \frac{1}{\nu}\lambda\alpha (\alpha x)^{\lambda - 1} \beta(x). \nonumber
\end{align}
Now, we select the form of the acceleration function $\beta(x)$ such that $\beta'(x) =  \frac{1}{\nu}\lambda\alpha (\alpha x)^{\lambda - 1} \beta(x)$ (that is, equal to $\beta'(x)'s$ upper bound). Hence from \eqref{eq: beta(x) general} with $c^* = \frac{1}{\nu}$, $\beta(x) = \gamma e^{\frac{1}{\nu}(\alpha x)^\lambda}$ with $\gamma >0$. From Theorem~\ref{thm: factor hazard}, since we can express the hazard function as $h_0(x) = \lambda\alpha(\alpha x)^{\lambda - 1}
= C\,x^{p-1} g(x)$ with $C = \lambda\alpha^\lambda, p = \lambda$ and $g(x) = 1$
where $g(x)$ is a non-decreasing function for $x>0$, it follows that in order for condition \eqref{eq: tau bounds} to hold, we must have $\tau \in [0, 1]$. Accordingly, we select our acceleration function as:
\begin{align}
    \beta(x) =  \gamma e^{\frac{1}{\nu}(\alpha \tau x)^\lambda},\quad x > 0, \nonumber
\end{align}
where $\alpha, \lambda, \gamma, \nu  > 0$ with $\tau \in [0, 1]$.

\subsection{Moments}
Noting $X \sim Weibull(\alpha, \lambda)$ and $Y \sim Weibull(\gamma, \nu)$, we have:
\begin{align}
    E(X) &= \frac{1}{\alpha} \Gamma\left(1+ \frac{1}{\lambda}\right), \label{eq: Weibull ex} \\
    E(Y) &= \frac{1}{\gamma} \Gamma\left(1+ \frac{1}{\nu}\right), \label{eq: Weibull ey}  \\
    {Var}(X) &=  \frac{1}{\alpha^2}\left( \Gamma\left( 1+ \frac{2}{\lambda}\right)  - \left(\Gamma\left(1+\frac{1}{\lambda}\right)\right)^2 \right),\label{eq: Weibull varx} \\
    {Var}(Y) &=  \frac{1}{\gamma^2}\left( \Gamma\left( 1+ \frac{2}{\nu}\right)  - \left(\Gamma\left(1+\frac{1}{\nu}\right)\right)^2 \right).\label{eq: Weibull vary} 
\end{align}
Accordingly:
\begin{align}
\operatorname{Cov}(X, Y) 
&= \int_0^\infty \int_0^\infty 
   \big( P(Y>y , X> x) - P(X> x)P(Y> y) \big) \,dy\,dx \nonumber \\
&= \int_0^\infty \int_0^\infty  
   \bar{F}_0(x) \left( \bar{F}_1\bigl(\beta(x)y\bigr) - \bar{F}_1\bigl(\beta(0)y\bigr) \right) \,dy\,dx \nonumber \\
&= \int_0^\infty e^{-(\alpha x)^\lambda}
   \left( \int_0^\infty \left( e^{-(\beta(x)y )^\nu}- e^{-(\beta(0)y )^\nu} \right) dy \right) dx \nonumber \\
&= \int_0^\infty e^{-(\alpha x)^\lambda}
   \left( \frac{\Gamma\!\left(1+\frac{1}{\nu}\right)}{\beta(x)}  -\frac{\Gamma\!\left(1+\frac{1}{\nu}\right)}{\beta(0)}  \right) dx \nonumber \\
&=  \frac{1}{\gamma } \Gamma\!\left( 1+ \frac{1}{\nu}\right) 
   \left[\int_0^\infty e^{-(\alpha x)^\lambda - \frac{1}{\nu}(\alpha \tau x)^\lambda} \, dx 
         - \int_0^\infty e^{-(\alpha x)^\lambda} \, dx \right] \nonumber \\
&= \frac{1}{\alpha \gamma} \Gamma\!\left(1+ \frac{1}{\nu} \right)\Gamma\!\left(1+ \frac{1}{\lambda} \right)
   \left[ \left(\frac{\nu}{\nu + \tau^\lambda} \right)^{\frac{1}{\lambda}} - 1 \right].
\label{eq: Weibull cov}
\end{align}
Now, using \eqref{eq: Weibull varx}, \eqref{eq: Weibull vary} and \eqref{eq: Weibull cov}, the correlation function is:
\begin{align}
    \rho(X,Y)
    &= \frac{\operatorname{Cov}(X,Y)}{\sqrt{\operatorname{Var}(X)\operatorname{Var}(Y)}} \nonumber\\
    &= \frac{\Gamma\left(1+ \frac{1}{\nu} \right)\Gamma\left(1+ \frac{1}{\lambda} \right)\left[ \left(\frac{\nu}{\nu + \tau^\lambda} \right)^{\frac{1}{\lambda}} - 1 \right]}{\sqrt{\left( \Gamma\left( 1+ \frac{2}{\lambda}\right)  - \left(\Gamma\left(1+\frac{1}{\lambda}\right)\right)^2 \right)\left( \Gamma\left( 1+ \frac{2}{\nu}\right)  - \left(\Gamma\left(1+\frac{1}{\nu}\right)\right)^2 \right)}}.
    \label{eq: Weibull rho}
\end{align}
Clearly, \eqref{eq: Weibull rho} is strictly decreasing in $\tau \in [0, 1]$. Hence:
\begin{align}
     \rho_{\text{min}}(\lambda, \nu):=\frac{\Gamma\left(1+ \frac{1}{\nu} \right)\Gamma\left(1+ \frac{1}{\lambda} \right)\left[ \left(\frac{\nu}{\nu + 1} \right)^{\frac{1}{\lambda}} - 1 \right]}{\sqrt{\left( \Gamma\left( 1+ \frac{2}{\lambda}\right)  - \left(\Gamma\left(1+\frac{1}{\lambda}\right)\right)^2 \right)\left( \Gamma\left( 1+ \frac{2}{\nu}\right)  - \left(\Gamma\left(1+\frac{1}{\nu}\right)\right)^2 \right)}} \leq \rho(X, Y) \leq 0.
\end{align}
Now, $\inf_{\lambda, \nu > 0} \rho_{\text{min}}(\lambda, \nu) = \lim_{\nu \to \infty}\rho_{\text{min}}(\lambda = 1, \nu)  = -\frac{\sqrt{6}}{\pi}$. Hence, this Weibull model with acceleration function $\beta(x) = \gamma e^{\frac{1}{\nu}(\alpha \tau x)^\lambda}$ will be able to accommodate correlations in the range $\rho(X,Y)\in\big[-\tfrac{\sqrt{6}}{\pi},\,0\big]$.

\section{Log-Logistic}
We restrict the survival function forms to the log-logistic family as such:
\begin{align}
    \bar{F}_0(x) &= P(X>x) = 1 - \frac{1}{1+(\alpha x)^{-\lambda}} = \frac{1}{1+ (\alpha x)^\lambda}, \quad x> 0,\nonumber
\end{align}
where $\alpha,\lambda > 0$ and, for each $x>0$:
\begin{align}
    \bar{F}_1(\beta(x) y) &= P(Y>y \mid X>x) =1 - \frac{1}{1+(\beta(x) y)^{-\nu}} = \frac{1}{1+ (\beta(x) y)^\nu}, \quad y>0, \nonumber
\end{align}
where $\beta(x), \nu > 0$ for all $x>0 $ to ensure a valid survival function. The corresponding joint survival function will be:
\begin{align}
    \bar{F}_0(x) \bar{F}_1(\beta(x) y) &= P(X>x,Y>y) = \frac{1}{(1+(\alpha x)^\lambda)(1+(\beta(x) y)^\nu)}, \quad x, y>0. \label{eq: Log joint survival}
\end{align}
Accordingly, $f_X(x) = \frac{\alpha \lambda(\alpha x)^{\lambda-1}}{((\alpha x)^\lambda+1)^2}$ for $x> 0$ and $f_Y(y) = \frac{\beta(0) \nu(\beta(0) y)^{\nu-1}}{((\beta(0) y)^\nu+1)^2}$ for $y> 0$. Hence, $X\sim Log-Logistic(\alpha, \lambda)$ and $Y\sim Log-Logistic(\beta(0), \nu)$.\\

Now since $[tf_1(t)]' = \frac{\nu^2 t^{\nu -1} (1- t^\nu)}{t^{3\nu} + 3t^{2\nu} + 3t^{\nu}+1} > 0$ for $t<1$, and $S(t) = -\frac{t^\nu+1}{\nu (t^\nu -1)}$ is increasing on this interval, $\inf_{t<1} S(t) = \lim_{t \to 0^+} S(t) = c^*= \frac{1}{\nu}$. Additionally, since $[tf_1(t)]' =\frac{\nu^2 t^{\nu -1} (1- t^\nu)}{t^{3\nu} + 3t^{2\nu} + 3t^{\nu}+1} < 0$ for $t>1$, and $S(t) = -\frac{t^\nu+1}{\nu (t^\nu -1)}$ is increasing on this interval, $\sup_{t>1} S(t) = \lim_{t \to \infty} S(t) = -\frac{1}{\nu}$. Consequently, applying \eqref{eq: B'(x) bounds} yields:
\begin{align}
    -\frac{1}{\nu} \frac{\alpha \lambda(\alpha x)^{\lambda - 1}}{(\alpha x)^\lambda+1} \beta(x) \leq B'(x) \leq\frac{1}{\nu} \frac{\alpha \lambda(\alpha x)^{\lambda - 1}}{(\alpha x)^\lambda+1} \beta(x) . \nonumber
\end{align}
Now, we select the form of the acceleration function $\beta(x)$ such that $\beta'(x) = \frac{1}{\nu} \frac{\alpha \lambda(\alpha x)^{\lambda - 1}}{(\alpha x)^\lambda+1} \beta(x)$ (that is, equal to $\beta'(x)'s$ upper bound). Hence from \eqref{eq: beta(x) general} with $c^* = \frac{1}{\nu}$, $\beta(x) = \gamma (1+(\alpha x)^\lambda)^{\frac{1}{\nu}}$ with $\gamma >0$. From Theorem~\ref{thm: factor hazard}, since we can express the hazard function as $h_0(x) = \frac{\alpha \lambda(\alpha x)^{\lambda - 1}}{(\alpha x)^\lambda+1} 
= C\,x^{p-1} g(x)$ with $C = \lambda\alpha^\lambda, p = \lambda$ and $g(x) = \frac{1}{(\alpha x)^\lambda+1}$
where $g(x)$ is a non-decreasing function for $x>0$, it follows in order for condition \eqref{eq: tau bounds} to hold, we must have $\tau \in [0, 1]$. Accordingly, we select our acceleration function as:
\begin{align}
    \beta(x) =  \gamma (1+(\alpha \tau x)^\lambda)^{\frac{1}{\nu}},\quad x > 0, \nonumber
\end{align}
where $\alpha, \lambda, \gamma, \nu  > 0$ with $\tau \in [0, 1]$.

\subsection{Moments}
Noting $X \sim Log-Logistic(\alpha, \lambda)$ and $Y \sim Log-Logistic(\gamma, \nu)$, we have:
\begin{align}
    E(X) &= \frac{\pi}{\alpha \lambda \sin\!\left(\tfrac{\pi}{\lambda}\right)}, \quad \lambda > 1,  \label{eq: Log ex} \\
    E(Y) &= \frac{\pi}{\gamma \nu \sin\!\left(\tfrac{\pi}{\nu}\right)}, \quad \nu > 1,  \label{eq: Log ey}  \\
    {Var}(X) &= \frac{1}{\alpha^2} \left[ \frac{2\pi}{\lambda \sin\!\left(\tfrac{2\pi}{\lambda}\right)} 
            - \left(\frac{\pi}{\lambda \sin\!\left(\tfrac{\pi}{\lambda}\right)}\right)^2 \right], \quad \lambda > 2, 
            \label{eq: Log varx} \\
    {Var}(Y) &= \frac{1}{\gamma^2} \left[ \frac{2\pi}{\nu \sin\!\left(\tfrac{2\pi}{\nu}\right)} 
            - \left(\frac{\pi}{\nu \sin\!\left(\tfrac{\pi}{\nu}\right)}\right)^2 \right], \quad \nu > 2.
            \label{eq: Log vary}
\end{align}

Accordingly:
\begin{align}
\operatorname{Cov}(X, Y) 
&= \int_0^\infty \int_0^\infty 
   \big( P(Y>y , X> x) - P(X> x)P(Y> y) \big) \,dy\,dx \nonumber \\
&= \int_0^\infty \int_0^\infty  
   \bar{F}_0(x) \left( \bar{F}_1\bigl(\beta(x)y\bigr) - \bar{F}_1\bigl(\beta(0)y\bigr) \right) \,dy\,dx \nonumber \\
&= \int_0^\infty \frac{1}{1+(\alpha x)^\lambda}
   \left( \int_0^\infty \left( \frac{1}{1+(\beta(x) y)^\nu}- \frac{1}{1+(\beta(0) y)^\nu} \right) dy \right) dx \nonumber \\
&= \frac{\pi}{\nu \sin\left(\frac{\pi}{\nu} \right)}\int_0^\infty \frac{1}{1+(\alpha x)^\lambda}
   \left( \frac{1}{\beta(x)}  -\frac{1}{\beta(0)}  \right) dx \nonumber \\
&=  \frac{\pi}{\gamma \nu \sin\left(\frac{\pi}{\nu} \right)}
   \left[\int_0^\infty \frac{dx}{\left((1+(\alpha x )^\lambda \right)\left((1+(\alpha\tau x )^\lambda \right)^\frac{1}{\nu}}
         - \int_0^\infty  \frac{dx}{1+(\alpha x)^\lambda}\right] \nonumber \\
&= \frac{\pi}{\alpha \lambda \nu \gamma  \sin\left(\frac{\pi}{\nu}\right)}\left[ \frac{\Gamma\left( \frac{1}{\lambda}\right)\Gamma\left(1 +\frac{1}{\nu} - \frac{1}{\lambda} \right)}{\Gamma\left(1+\frac{1}{\nu} \right)} {}_{2}F_{1}\left( \frac{1}{\nu}, \frac{1}{\lambda}; 1+\frac{1}{\nu}; 1 - \tau^\lambda\right) - \frac{\pi}{\sin\left(\frac{\pi}{\lambda}\right)}  \right].
\label{eq: Log cov}
\end{align}
where the conditions $\nu>1$ and $\lambda>1$ ensure convergence of the inner and outer integrals, respectively. Now, using \eqref{eq: Log varx}, \eqref{eq: Log vary} and \eqref{eq: Log cov}, the correlation function is:
\begin{align}
    \rho(X,Y)
    &= \frac{\operatorname{Cov}(X,Y)}{\sqrt{\operatorname{Var}(X)\operatorname{Var}(Y)}} \nonumber\\
    &= \frac{\pi\left[ \frac{\Gamma\left( \frac{1}{\lambda}\right)\Gamma\left(1 +\frac{1}{\nu} - \frac{1}{\lambda} \right)}{\Gamma\left(1+\frac{1}{\nu} \right)} {}_{2}F_{1}\left( \frac{1}{\nu}, \frac{1}{\lambda}; 1+\frac{1}{\nu}; 1 - \tau^\lambda\right) - \frac{\pi}{\sin\left(\frac{\pi}{\lambda}\right)}  \right]}{\lambda \nu \sin\left( \frac{\pi}{\nu}\right) \sqrt{ \left[ \frac{2\pi}{\lambda \sin\!\left(\tfrac{2\pi}{\lambda}\right)} 
            - \left(\frac{\pi}{\lambda \sin\!\left(\tfrac{\pi}{\lambda}\right)}\right)^2 \right] \left[ \frac{2\pi}{\nu \sin\!\left(\tfrac{2\pi}{\nu}\right)} 
            - \left(\frac{\pi}{\nu \sin\!\left(\tfrac{\pi}{\nu}\right)}\right)^2 \right]}},
    \label{eq: Log rho}
\end{align}
valid for $\lambda, \nu >2$. Since $ {}_{2}F_{1}\left( \frac{1}{\nu}, \frac{1}{\lambda}; 1+\frac{1}{\nu}; z\right)$ has a power series with strictly positive coefficients, it is strictly increasing in $z\in[0,1]$; hence $ {}_{2}F_{1}\left( \frac{1}{\nu}, \frac{1}{\lambda}; 1+\frac{1}{\nu}; 1 - \tau^\lambda\right)$ is strictly decreasing in $\tau\in[0,1]$. Consequently, \eqref{eq: Log rho} attains its minimum at $\tau=1$ and its maximum (zero) at $\tau=0$. Therefore,
\begin{align}
    \rho_{\text{min}(\lambda, \nu)}:=\frac{\pi\left[ \frac{\Gamma\left( \frac{1}{\lambda}\right)\Gamma\left(1 +\frac{1}{\nu} - \frac{1}{\lambda} \right)}{\Gamma\left(1+\frac{1}{\nu} \right)}  - \frac{\pi}{\sin\left(\frac{\pi}{\lambda}\right)}  \right]}{\lambda \nu \sin\left( \frac{\pi}{\nu}\right) \sqrt{ \left[ \frac{2\pi}{\lambda \sin\!\left(\tfrac{2\pi}{\lambda}\right)} 
            - \left(\frac{\pi}{\lambda \sin\!\left(\tfrac{\pi}{\lambda}\right)}\right)^2 \right] \left[ \frac{2\pi}{\nu \sin\!\left(\tfrac{2\pi}{\nu}\right)} 
            - \left(\frac{\pi}{\nu \sin\!\left(\tfrac{\pi}{\nu}\right)}\right)^2 \right]}} \le \rho(X,Y) \le 0.\nonumber
\end{align}
Now denoting $[\lambda^*, \nu^*]' = \operatorname{argmin}_{\lambda, \nu} \rho_{\text{min}(\lambda, \nu)}$, we obtain $\inf_{\lambda, \nu >2}\rho_{\text{min}}(\lambda, \nu) = \lim_{\nu  \to \infty}\rho_{\text{min}}(\lambda = \lambda^* \approx 5.06, \nu) \approx -0.54076$. Hence, this log-logistic model with acceleration function $\beta(x) = \gamma (1+(\alpha \tau x)^\lambda)^{\frac{1}{\nu}}$ will be able to accommodate correlations in the range $\rho(X,Y)\in\big[-0.54076,\,0\big]$.

\section{Half-Cauchy}
Restricting the survival function forms to the half-Cauchy family, we obtain:
\begin{align}
    \bar{F}_0(x) &= P(X>x) = 1 - \frac{2}{\pi}\arctan(\alpha x), \quad x> 0,\nonumber
\end{align}
where $\alpha > 0$ and, for each $x>0$:
\begin{align}
    \bar{F}_1(\beta(x) y) &= P(Y>y \mid X>x) =1 - \frac{2}{\pi}\arctan(\beta(x) y), \quad y>0, \nonumber
\end{align}
where $\beta(x) > 0$ for all $x>0 $ to ensure a valid survival function. The corresponding joint survival function will be:
\begin{align}
    \bar{F}_0(x) \bar{F}_1(\beta(x) y) &= P(X>x,Y>y) = \frac{1}{\pi^2}\left(2\arctan(\alpha x) - \pi \right)\left(2\arctan(\beta(x) y) - \pi \right), \quad x, y>0. \label{eq: Cauchy joint survival}
\end{align}
Accordingly, $f_X(x) = \frac{2\alpha}{\pi (1+(\alpha x)^2)}$ for $x> 0$ and $f_Y(y) = \frac{2\beta(0)}{\pi (1+(\beta(0) y)^2)}$ for $y> 0$. Hence, $X\sim Half-Cauchy(\alpha)$ and $Y\sim Half-Cauchy(\beta(0))$.\\

Now since $[tf_1(t)]' = \frac{2(1 - t^2)}{\pi (t^2+1)^2} > 0 $ for $t<1$, and $S(t) = \frac{t^2+1}{1 - t^2}$ is increasing on this interval, $\inf_{t<1} S(t) = \lim_{t \to 0^+} S(t) = c^*=  1$. Additionally, since $[tf_1(t)]' =\frac{2(1 - t^2)}{\pi (t^2+1)^2} < 0$ for $t>1$, and $S(t) = \frac{t^2+1}{1 - t^2}$ is increasing on this interval, $\sup_{t>1} S(t) = \lim_{t \to \infty} S(t) = -1$. Consequently, applying \eqref{eq: B'(x) bounds} yields:
\begin{align}
    - \frac{\alpha}{(1+(\alpha x)^2)(\frac{\pi}{2} -\arctan(\alpha x))} \beta(x)\leq B'(x) \leq \frac{\alpha}{(1+(\alpha x)^2)(\frac{\pi}{2} -\arctan(\alpha x))} \beta(x). \nonumber
\end{align}
Now, we select the form of the acceleration function $\beta(x)$ such that $\beta'(x) =\frac{\alpha}{(1+(\alpha x)^2)(\frac{\pi}{2} -\arctan(\alpha x))} \beta(x)$ (that is, equal to $\beta'(x)'s$ upper bound). Hence from \eqref{eq: beta(x) general} with $c^* = 1$, $\beta(x) =  \gamma\frac{1}{1 - \frac{2}{\pi}\arctan(\alpha x)}$ with $\gamma >0$. From Theorem \ref{thm: half-cauchy} in Appendix \ref{app: thms}, since $\phi(x) = xh_0(x) = x\frac{\alpha}{(1+(\alpha x)^2)(\frac{\pi}{2} -\arctan(\alpha x))}$ is strictly increasing on $x\in (0,\infty)$, by Theorem \ref{thm: ABC}, it follows that in order for condition \eqref{eq: tau bounds} to hold, we must have $\tau \in [0, 1]$. Accordingly, we select our acceleration function as:
\begin{align}
    \beta(x) =   \gamma\frac{1}{1- \frac{2}{\pi}\arctan(\alpha \tau x)},\quad x > 0, \nonumber
\end{align}
where $\alpha, \gamma> 0$ with $\tau \in [0, 1]$.

\section{Gamma}
We restrict the survival function forms to the Gamma family as such:
\begin{align}
    \bar{F}_0(x) &= P(X>x) = 1 - \frac{1}{\Gamma(\lambda)}\gamma^*(\lambda, \alpha x) = \frac{\Gamma(\lambda, \alpha x)}{\Gamma(\lambda)}, \quad x> 0,\nonumber
\end{align}
where $\alpha, \lambda >0$ and denoting $\gamma^*$ as the lower incomplete gamma function. Now for each $x>0$:
\begin{align}
    \bar{F}_1(\beta(x) y) &= P(Y>y \mid X>x) =  1 - \frac{1}{\Gamma(\nu)}\gamma^*(\nu, \beta(x) y) = \frac{\Gamma(\nu, \beta(x) y)}{\Gamma(\nu)}, \quad y>0, \nonumber
\end{align}
where $\beta(x), \nu > 0$ for all $x>0 $ to ensure a valid survival function. The corresponding joint survival function will be:
\begin{align}
    \bar{F}_0(x) \bar{F}_1(\beta(x) y) &= P(X>x,Y>y) = \frac{\Gamma(\lambda, \alpha x)\Gamma(\nu, \beta(x)y)}{\Gamma(\lambda) \Gamma(\nu)}, \quad x, y>0. \label{eq: Gamma joint survival}
\end{align}
Accordingly, $f_X(x) = \frac{\alpha^\lambda}{\Gamma(\lambda)} x^{\lambda - 1}e^{-\alpha x}$ for $x> 0$ and $f_Y(y) = \frac{\beta(0)^\nu}{\Gamma(\nu)} y^{\nu - 1}e^{-\beta(0) y}$ for $y> 0$. Hence, $X\sim Gamma(\alpha, \lambda)$ and $Y\sim Gamma(\beta(0), \nu)$.\\

Now since $[tf_1(t)]' = \frac{t^{\nu - 1}(\nu - t)e^{-t}}{\Gamma(\nu)} > 0 $ for $t<\nu$, and $S(t) = \frac{1}{\nu - t}$ is increasing on this interval, $\inf_{t<\nu} S(t) = \lim_{t \to 0^+} S(t) = c^*= \frac{1}{\nu}$. Additionally, since $[tf_1(t)]' =\frac{t^{\nu - 1}(\nu - t)e^{-t}}{\Gamma(\nu)} < 0$ for $t>\nu$, and $S(t) =\frac{1}{\nu - t}$ is increasing on this interval, $\sup_{t>\nu} S(t) = \lim_{t \to \infty} S(t) = 0$. Consequently, applying \eqref{eq: B'(x) bounds} yields:
\begin{align}
    0 \leq B'(x) \leq \frac{1}{\nu} \frac{\alpha^\lambda}{\Gamma(\lambda, \alpha x)}x^{\lambda -1} e^{-\alpha x} \beta(x). \nonumber
\end{align}
Now, we select the form of the acceleration function $\beta(x)$ such that $\beta'(x) =\frac{1}{\nu} \frac{\alpha^\lambda}{\Gamma(\lambda, \alpha x)}x^{\lambda -1} e^{-\alpha x}$ (that is, equal to $\beta'(x)'s$ upper bound). Hence from \eqref{eq: beta(x) general} with $c^* = \frac{1}{\nu}$, $\beta(x) =  \gamma \left[\frac{\Gamma(\lambda)}{\Gamma(\lambda, \alpha x)}\right]^{\frac{1}{\nu}}$ with $\gamma >0$. From Theorem \ref{thm: gamma} in Appendix \ref{app: thms}, since $\phi(x) = xh_0(x) = x \frac{\alpha^\lambda}{\Gamma(\lambda, \alpha x)}x^{\lambda -1} e^{-\alpha x}$ is strictly increasing on $x\in (0,\infty)$, by Theorem \ref{thm: ABC}, it follows that in order for condition \eqref{eq: tau bounds} to hold, we must have $\tau \in [0, 1]$. Accordingly, we select our acceleration function as:
\begin{align}
    \beta(x) =    \gamma \left[\frac{\Gamma(\lambda)}{\Gamma(\lambda, \alpha \tau x)}\right]^{\frac{1}{\nu}},\quad x > 0,  \nonumber
\end{align}
where $\alpha, \lambda, \gamma, \nu  > 0$ with $\tau \in [0, 1]$.

\section{Simulation}
\subsection{Method}
To generate dependent pairs $(X,Y)$ for simulation, we employ a copula-based simulation approach for the method of moments estimation and a Metropolis-Hastings (MH) algorithm for maximum likelihood estimation. m.m.e's Rely exclusively on a limited set of moments; such as means, variances, and covariances - being such, MH samples may fail to reproduce these accurately (unless the Markov chains are extremely long and well-mixed) leading to poor moment-based estimates. Copula-based simulation provides a more effective solution in this context, by generating synthetic $(X,Y)$ pairs that preserve the specified marginal distributions and dependence structure, ensuring that simulated moments are consistent with their theoretical counterparts. In contrast, copula-based simulation is unsuitable for likelihood-based estimation because it alters the joint density of 
$(X,Y)$, whereas m.l.e's are defined relative to the true joint distribution. Using copula-generated samples for maximum likelihood estimation would therefore imply optimizing a likelihood corresponding to a different model.\\
 
Additionally, we note that since we were able to derive $F_{Y\mid X=x}(y)$ for the exponential model, we may merely utilise the inverse transform method to simulate a $Y$ conditioned on $X$, given $X \sim Exp(\alpha)$ and $Y\sim Exp(\beta(0))$. Specifically, making use of \eqref{eq: Fygivx exp}, and conditional on a draw of $X = x$, we may obtain: 
\begin{align}
    Y= -\frac{\tau W\left( \frac{(U-1)e^{-\frac{1}{\tau}}}{\tau}\right) +1}{\gamma \tau e^{\alpha\tau x}} \nonumber
\end{align}
where $W$ is the lower branch of the Lambert W function and $U\sim Uniform(0, 1)$.\\

Furthermore, we conduct simulations based on $10000$\footnote{For the half-Cauchy and gamma models, only $1000$ datasets were used, as the computation of the m.l.e's required substantially more time when employing R’s \texttt{optim} function with the \texttt{L-BFGS-B} algorithm.}
data sets of size $n = 1000$, $5000$ and $10000$ for fixed parameter values: $\alpha = 1, \gamma =2, \lambda =3, \nu = 4$ and $\tau = 0.5$.

\subsubsection{Method of moments}
The copula preserves the marginal 
distributions of $X$ and $Y$ while introducing dependence through $\rho_{\text{latent}}$. 
Specifically, we use a Student-$t$ copula to capture both tail dependence 
and negative correlations. Here, we denote $F_X(x) = 1 - \bar{F}_0(x)$ and 
$F_Y(y) = 1 - \bar{F}_1(\beta(0)y)$.\\

\noindent
\textbf{Step 1:} Construct a $t$-copula with correlation $\rho_{\text{latent}}$ 
and degrees of freedom $v_c$. Formally, for $U = F_X(X)$ and $V = F_Y(Y)$, 
we have $(U,V) \sim C_{v_c,\rho_{\text{latent}}}$, where the copula distribution function is:
\[
C_{v_c,\rho_{\text{latent}}}(u,v) 
= t_{2,v_c}\!\big(t^{-1}_{v_c}(u),\,t^{-1}_{v_c}(v);\,\rho_{\text{latent}}\big), 
\quad (u,v)\in[0,1]^2,
\]
with $t^{-1}_{v_c}$ the quantile function of a univariate Student-$t(v_c)$ and 
$t_{2,v_c}$ the bivariate Student-$t$ distribution function with correlation $\rho_{\text{latent}}$. 
From Sklar's Theorem, the joint density of $(X,Y)$ with marginal distributions $F_X$ and $F_Y$ is:
\[
f_{X,Y}(x,y) 
= c_{v_c,\rho_{\text{latent}}}\!\big(F_X(x),F_Y(y)\big)\, f_X(x)\, f_Y(y),
\]
where the copula density is $c_{v_c,\rho_{\text{latent}}}(u,v) 
= \frac{\partial^2}{\partial u \,\partial v} C_{v_c,\rho_{\text{latent}}}(u,v)$.\\

\noindent
\textbf{Step 2:} Map the model Pearson correlation $\rho_{\text{target}}$ into the copula correlation parameter $\rho_{\text{latent}}$ that drives dependence 
in the $t$-copula. This requires re-expressing the correlation in terms of 
$c_{v_c, \rho_{\text{latent}}}(u, v)$ and solving numerically to obtain $\rho_{\text{latent}}$:
\[
\rho(X,Y) = \frac{{Cov}(X,Y)}{\sqrt{{Var}(X)\,{Var}(Y)}} 
= \frac{\int_0^1 \int_0^1 F_X^{-1}(u)\,F_Y^{-1}(v)\, c_{v_c, \rho_{\text{latent}}}(u,v)\,du\,dv 
- {E}(X){E}(Y)}
{\sqrt{{Var}(X)\,{Var}(Y)}},
\]
with ${E}(X) = \int_0^1 F_X^{-1}(u)\,du$, 
${E}(Y) = \int_0^1 F_Y^{-1}(v)\,dv$, and 
${Var}(X) = {E}(X^2)-{E}(X)^2$.\\

\noindent
\textbf{Step 3:} Simulate $n$ pairs $(U,V)$ from $(U,V) \sim C_{v_c,\rho_{\text{latent}}}$, 
then transform back via the inverse distributions to obtain 
$X = F_X^{-1}(U)$ and $Y = F_Y^{-1}(V)$.\\

Furthermore, Appendix \ref{app: mmes} provides a guide on how to obtain m.m.e's for the models (noting that moments were not provided for the half-Cauchy and gamma models, we exclude m.m.e's for said models).

\subsubsection{Maximum Likelihood Estimation}



We utilise the MH algorithm to sample from the joint density of the models considered in this study, where the target distribution is given by $\pi(x,y)=f_{X,Y}(x,y)$, noting that $X\sim Family(\alpha, \lambda)$ and $Y \sim Family(\beta(0), \nu)$, with $f_{X,Y}(x,y)$ given in Appendix \ref{app: joint lik}.\\

\noindent
\textbf{Step 1 (Initialization):} Choose any starting point \((x^{(0)},y^{(0)})\) in the support of \(f_{X,Y}\). Set \(j=0\). \\

\noindent
\textbf{Step 2 (Proposal):} Given the current state \((x^{(j)},y^{(j)})\), draw a candidate
\[
(x^*,y^*) \sim q(\,\cdot\,\mid x^{(j)},y^{(j)}).
\]
The study utilises an independence proposal:
\[
q(x^*,y^*\mid x^{(j)},y^{(j)}) = q(x^*,y^*)=f_X(x^*)\,f_Y(y^*).
\] \\
\noindent
\textbf{Step 3 (Acceptance ratio $\alpha_a$):} Compute:
\begin{align}
\alpha_a &= \min\left(
\frac{f_{X,Y}(x^*,y^*)\,q(x^{(j)},y^{(j)}\mid x^*,y^*)}{f_{X,Y}(x^{(j)},y^{(j)})\,q(x^*,y^*\mid x^{(j)},y^{(j)})}, 1 \right) \nonumber \\
& = \min \left(\frac{f_{X,Y}(x^*,y^*)}{f_X(x^*)\,f_Y(y^*)}
\bigg/
\frac{f_{X,Y}(x^{(j)},y^{(j)})}{f_X(x^{(j)})\,f_Y(y^{(j)})}, 1\right). \nonumber
\end{align}
Equivalently, using logs for numerical stability,
\begin{align}
\log(\alpha_a) 
&= \min\Bigg\{ 
\Big[\,\log\!\big(f_{X,Y}(x^*, y^*)\big)
      - \log\!\big(f_X(x^*)\big)
      - \log\!\big(f_Y(y^*)\big)\,\Big] \nonumber \\
&\quad\quad -
\Big[\,\log\!\big(f_{X,Y}(x^{(j)}, y^{(j)})\big)
      - \log\!\big(f_X(x^{(j)})\big)
      - \log\!\big(f_Y(y^{(j)})\big)\,\Big],
\, 0 \Bigg\}. \nonumber
\end{align}
Subsequently, $(x^*, y^*)$ is accepted as the new pair in the Markov chain under the following acceptance criterion:
\begin{align}
(x^{(j+1)}, y^{(j+1)}) =
\begin{cases}
(x^*, y^*), & \text{if } U < \alpha_a \\
(x^{(j)}, y^{(j)}), & \text{otherwise.} 
\end{cases}\nonumber
\end{align}
where $U\sim Uniform(0, 1)$.\\

Furthermore, Appendix \ref{app: joint lik} presents the log-likelihoods for the study's models - m.l.e's are obtained numerically as $\operatorname{argmax}_{\boldsymbol{\theta}} \ell (\boldsymbol{\theta})$ where $\boldsymbol{\theta} = \left[\alpha, \lambda, \gamma, \nu, \tau\right]'$.

\subsection{Simulation Data}
The corresponding m.m.e and m.l.e results are reported in Tables~\ref{Table: Simulations for exp model} - \ref{Table: Simulations for Gamma model}, together with $95\%$ confidence intervals computed as $\hat{\theta} \pm Z_{\alpha/2}\text{S.E.}(\hat{\theta})$, where $\hat{\theta}$ denotes the point estimator\footnote{The reported standard errors (S.E.) are calculated as the empirical standard deviation of the parameter estimates across simulated datasets, representing the sampling variability of the estimator.}. In addition, we report Pearson correlation coefficients: $\text{PC}^{\text{C}}$ refers to correlations from datasets generated via the copula, while $\text{PC}^{\text{MH}}$ refers to correlations from datasets generated using the MH algorithm.\\

As expected, the standard errors of both m.m.e's and m.l.e's \footnote{m.m.e's (where applicable) were used as initial values when using R's \texttt{optim} function.} decrease as the sample size $n$ increases, yielding narrower $95\%$ confidence intervals. We refrain from directly comparing the accuracy of the two estimation methods, since they were applied to different simulated datasets. A limitation of the simulation is the relative inaccuracy of Pearson correlations obtained for datasets derived from the copula in the log-logistic model (see $\text{PC}^{\text{C}}$ in Table~\ref{Table: Simulations for log-logistic model}), as well as the corresponding inaccuracy of the Pearson correlations obtained for datasets generated by the MH algorithm in the Lomax model (see $\text{PC}^{\text{MH}}$ in Table~\ref{Table: Simulations for Lomax model}). Finally, the simulations indicate that the Lomax model exhibits some instability in parameter estimation (Table~\ref{Table: Simulations for Lomax model}); reliable estimation appears to require substantially larger sample sizes.\\

Figure \ref{fig: sims gif} displays the bivariate densities of the models used in the study given the fixed parameter set.

\begin{table}[H] \centering 
	\begin{tabular}{@{\extracolsep{1pt}} ccccccccc} 
		\\[-1.8ex]\hline 
		\hline \\[-1.8ex] 
		n & Parameter & m.m.e & m.l.e & SE(m.m.e) & SE(m.l.e) & $95\%$CI (m.m.e) & $95\%$CI (m.l.e) & PC \\ 
		\hline \\[-1.8ex] 
 & $\alpha$ & 1.001 & 1.000 & 0.032 & 0.031 & $(0.938,\,1.063)$ & $(0.939,\,1.061)$ &  \\ 
    1000 & $\gamma$ & 2.001 & 2.000 & 0.063 & 0.063 & $(1.877,\,2.125)$ & $(1.878,\,2.123)$ &  -0.334 \\ 
     & $\tau$   & 0.499 & 0.501 & 0.045 & 0.028 &  $(0.410,\,0.588)$ & $(0.445,\,0.557)$ &  \\ 
     & $\rho$   & -0.333 & -0.334 & 0.020 & 0.013 & $(-0.373,\,-0.294)$ & $(-0.359,\,-0.309)$ & 
\\[1ex]
\hline 
 & $\alpha$ & 1.001 & 1.000 & 0.014 & 0.0138 & $(0.973,\,1.028)$ & $(0.973,\,1.027)$ & \\
    5000 & $\gamma$ & 2.001 & 2.000 & 0.028 & 0.0279 & $(1.945,\,2.056)$ & $(1.945,\,2.055)$ &  -0.334\\
     & $\tau$   & 0.500 & 0.500 & 0.020 & 0.0126 & $(0.460,\,0.539)$ & $(0.475,\,0.525)$ & \\
     & $\rho$   & -0.333 & -0.333 & 0.009 & 0.0056 & $(-0.351,\,-0.316)$ & $(-0.344,\,-0.322)$ & 
\\[1ex]
\hline  
 & $\alpha$ & 1.000 & 1.000 & 0.0100 & 0.0097 & $(0.980,\,1.020)$ & $(0.981,\,1.019)$ & \\
     10000 & $\gamma$ & 2.000 & 2.000 & 0.0203 & 0.0199 & $(1.960,\,2.040)$ & $(1.961,\,2.039)$ & -0.334 \\
      & $\tau$   & 0.500 & 0.500 & 0.0143 & 0.0090 & $(0.472,\,0.528)$ & $(0.482,\,0.518)$ & \\
      & $\rho$   & -0.333 & -0.333 & 0.0063 & 0.0040 & $(-0.346,\,-0.321)$ & $(-0.341,\,-0.326)$ & \\
		\hline
	\end{tabular} 
    \caption{Simulations for exponential model ($\alpha = 1$, $\gamma =2$,  $\tau = 0.5$ with $\rho = -0.333$).} 
    \label{Table: Simulations for exp model} 
\end{table} 

\begin{table}[H] \centering 
	\begin{tabular}{@{\extracolsep{1pt}} cccccccccc} 
		\\[-1.8ex]\hline 
		\hline \\[-1.8ex] 
		n & Parameter & m.m.e & m.l.e & SE(m.m.e) & SE(m.l.e) & $95\%$CI (m.m.e) & $95\%$CI (m.l.e) & PC$^{\text{C}}$ & PC$^{\text{MH}}$\\ 
		\hline \\[-1.8ex] 
        & $\alpha$ & 0.843 & 0.848 & 0.199 & 0.268 & $(0.452,\,1.234)$ & $(0.323,\,1.374)$ & \\
     & $\gamma$ & 1.795 & 1.889 & 0.495 & 0.622 & $(0.824,\,2.766)$ & $(0.670,\,3.109)$ & \\
     1000 & $\lambda$ & 3.383 & 3.406 & 0.591 & 0.888 & $(2.224,\,4.542)$ & $(1.664,\, 5.146)$ &-0.212& -0.236 \\
     & $\nu$   & 4.349 & 4.178 & 0.945 & 1.120 & $(2.496,\,6.201)$ & $(1.984,\,6.373)$ & \\
     & $\tau$  & 0.559 & 0.481 & 0.145 & 0.090 & $(0.276,\,0.843)$ & $(0.304,\,0.657)$ & \\
     & $\rho$  & -0.212 & -0.195 & 0.044 & 0.041 & $(-0.298,\,-0.125)$ & $(-0.276,\,-0.113)$ & 
\\[1ex]
\hline 
 & $\alpha$ & 0.911 & 0.926 & 0.148 & 0.157 & $(0.622,\,1.201)$ & $(0.618,\,1.235)$ & \\
     & $\gamma$ & 1.906 & 1.963 & 0.315 & 0.332 & $(1.289,\,2.524)$ & $(1.313,\,2.613)$ & \\
    5000 & $\lambda$ & 3.198 & 3.202 & 0.323 & 0.465 & $(2.565,\,3.832)$ & $(2.291,\,4.113)$ & -0.1987 & -0.224\\
     & $\nu$   & 4.153 & 4.062 & 0.456 & 0.440 & $(3.259,\,5.046)$ & $(3.200,\,4.924)$ & \\
     & $\tau$  & 0.522 & 0.491 & 0.089 & 0.051 & $(0.347,\,0.696)$ & $(0.391,\,0.592)$ & \\
     & $\rho$  & -0.199 & -0.192 & 0.030 & 0.023 & $(-0.257,\,-0.141)$ & $(-0.237,\,-0.148)$ & 
\\[1ex]
\hline 
 & $\alpha$ & 0.931 & 0.946 & 0.131 & 0.125 & $(0.674,\,1.189)$ & $(0.700,\,1.192)$ & \\
       & $\gamma$ & 1.935 & 1.974 & 0.245 & 0.227 & $(1.454,\,2.416)$ & $(1.529,\,2.419)$ & \\
       10000 & $\lambda$ & 3.149 & 3.143 & 0.266 & 0.349 & $(2.628,\,3.670)$ & $(2.458,\,3.828)$ & -0.1948 & -0.220\\
       & $\nu$   & 4.102 & 4.038 & 0.349 & 0.317 & $(3.419,\,4.785)$ & $(3.417,\,4.658)$ & \\
       & $\tau$  & 0.512 & 0.494 & 0.073 & 0.040 & $(0.368,\,0.656)$ & $(0.416,\,0.572)$ & \\
       & $\rho$  & -0.195 & -0.191 & 0.025 & 0.017 & $(-0.243,\,-0.146)$ & $(-0.225,\,-0.157)$ & \\
		\hline
	\end{tabular} 
    \caption{Simulations for Lomax model ($\alpha = 1$, $\gamma =2$, $\lambda = 3$, $\nu = 4$, $\tau = 0.5$ with $\rho = -0.187$).} 
    \label{Table: Simulations for Lomax model} 
\end{table}

\begin{table}[H] \centering 
	\begin{tabular}{@{\extracolsep{1pt}} cccccccccc} 
		\\[-1.8ex]\hline 
		\hline \\[-1.8ex] 
		n & Parameter & m.m.e & m.l.e & SE(m.m.e) & SE(m.l.e) & $95\%$CI (m.m.e) & $95\%$CI (m.l.e) & PC$^{\text{C}}$ & PC$^{\text{MH}}$\\ 
		\hline \\[-1.8ex] 
 & $\alpha$ & 1.000 & 1.000 & 0.011 & 0.012 & $(0.978,\,1.022)$ & $(0.976,\,1.025)$ & \\
     & $\gamma$ & 2.000 & 2.000 & 0.017 & 0.018 & $(1.968,\,2.033)$ & $(1.964,\,2.036)$ & \\
     1000 & $\lambda$ & 3.003 & 3.005 & 0.076 & 0.083 & $(2.854,\,3.151)$ & $(2.842,\,3.167)$ & -0.101 & -0.100\\
     & $\nu$   & 4.004 & 4.004 & 0.103 & 0.109 & $(3.802,\,4.205)$ & $(3.790,\,4.218)$ & \\
     & $\tau$  & 0.501 & 0.501 & 0.075 & 0.050 & $(0.354,\,0.648)$ & $(0.402,\,0.600)$ & \\
     & $\rho$  & -0.101 & -0.100 & 0.037 & 0.029 & $(-0.173,\,-0.029)$ & $(-0.157,\,-0.044)$ & 
\\[1ex]
\hline 
 & $\alpha$ & 1.000 & 1.000 & 0.0050 & 0.0056 & $(0.990,\,1.010)$ & $(0.989,\,1.011)$ & \\
     & $\gamma$ & 2.000 & 2.000 & 0.0075 & 0.0082 & $(1.985,\,2.015)$ & $(1.984,\,2.016)$ & \\
     5000 & $\lambda$ & 3.001 & 3.001 & 0.034 & 0.038 & $(2.934,\,3.067)$ & $(2.927,\,3.074)$ & -0.100 & -0.100 \\
     & $\nu$   & 4.001 & 4.000 & 0.047 & 0.049 & $(3.909,\,4.092)$ & $(3.904,\,4.096)$ & \\
     & $\tau$  & 0.500 & 0.500 & 0.028 & 0.021 & $(0.444,\,0.556)$ & $(0.459,\,0.541)$ & \\
     & $\rho$  & -0.100 & -0.100 & 0.016 & 0.013 & $(-0.132,\,-0.068)$ & $(-0.125,\,-0.075)$ &
\\[1ex]
\hline 
 & $\alpha$ & 1.000 & 1.000 & 0.0050 & 0.0039 & $(0.990,\,1.010)$ & $(0.992,\,1.008)$ & \\
       & $\gamma$ & 2.000 & 2.000 & 0.0075 & 0.0058 & $(1.985,\,2.015)$ & $(1.989,\,2.011)$ & \\
      10000 & $\lambda$ & 3.001 & 3.000 & 0.0337 & 0.0265 & $(2.934,\,3.067)$ & $(2.948,\,3.052)$ & -0.100 & -0.100\\
       & $\nu$   & 4.001 & 4.000 & 0.0467 & 0.0344 & $(3.909,\,4.092)$ & $(3.933,\,4.067)$ & \\
       & $\tau$  & 0.500 & 0.500 & 0.0284 & 0.0147 & $(0.444,\,0.556)$ & $(0.471,\,0.529)$ & \\
       & $\rho$  & -0.100 & -0.100 & 0.0163 & 0.0089 & $(-0.132,\,-0.068)$ & $(-0.118,\,-0.083)$ & \\
		\hline
	\end{tabular} 
    \caption{Simulations for Weibull model ($\alpha = 1$, $\gamma =2$, $\lambda = 3$, $\nu = 4$, $\tau = 0.5$ with $\rho = -0.100$).} 
    \label{Table: Simulations for Weibull model} 
\end{table}

\begin{table}[H] \centering 
	\begin{tabular}{@{\extracolsep{1pt}} cccccccccc} 
		\\[-1.8ex]\hline 
		\hline \\[-1.8ex] 
		n & Parameter & m.m.e & m.l.e & SE(m.m.e) & SE(m.l.e) & $95\%$CI (m.m.e) & $95\%$CI (m.l.e) & PC$^{\text{C}}$ & PC$^{\text{MH}}$\\ 
		\hline \\[-1.8ex] 
& $\alpha$ & 0.987 & 1.001 & 0.139 & 0.024 & $(0.715,\,1.259)$ & $(0.954,\,1.048)$ & \\
     & $\gamma$ & 1.996 & 1.998 & 0.034 & 0.033 & $(1.929,\,2.063)$ & $(1.934,\,2.063)$ & \\
    1000 & $\lambda$ & 3.146 & 3.022 & 0.228 & 0.125 & $(2.700,\,3.593)$ & $(2.778,\,3.266)$ & -0.150  &  -0.117 \\
     & $\nu$   & 4.084 & 4.013 & 0.238 & 0.125 & $(3.618,\,4.549)$ & $(3.767,\,4.259)$ & \\
     & $\tau$  & 0.573 & 0.488 & 0.121 & 0.079 & $(0.336,\,0.810)$ & $(0.332,\,0.644)$ & \\
     & $\rho$  & -0.150 & -0.119 & 0.045 & 0.030 & $(-0.239,\,-0.061)$ & $(-0.178,\,-0.061)$ & 
\\[1ex]
\hline 
 & $\alpha$ & 0.992 & 1.001 & 0.102 & 0.012 & $(0.791,\,1.192)$ & $(0.978,\,1.024)$ & \\
    & $\gamma$ & 1.998 & 1.999 & 0.019 & 0.016 & $(1.961,\,2.034)$ & $(1.968,\,2.030)$ & \\
   5000  & $\lambda$ & 3.080 & 3.011 & 0.156 & 0.068 & $(2.774,\,3.386)$ & $(2.878,\,3.144)$ & -0.137 & -0.122 \\
    & $\nu$   & 4.036 & 4.006 & 0.141 & 0.066 & $(3.760,\,4.312)$ & $(3.876,\,4.136)$ & \\
    & $\tau$  & 0.537 & 0.494 & 0.081 & 0.037 & $(0.378,\,0.696)$ & $(0.421,\,0.567)$ & \\
    & $\rho$  & -0.137 & -0.122 & 0.029 & 0.014 & $(-0.193,\,-0.080)$ & $(-0.150,\,-0.094)$ & 
\\[1ex]
\hline 
 & $\alpha$ & 0.993 & 1.000 & 0.098 & 0.008 & $(0.802,\,1.185)$ & $(0.984,\,1.016)$ & \\
       & $\gamma$ & 1.998 & 2.000 & 0.013 & 0.011 & $(1.972,\,2.024)$ & $(1.978,\,2.021)$ & \\
       10000 & $\lambda$ & 3.060 & 3.007 & 0.137 & 0.052 & $(2.792,\,3.327)$ & $(2.906,\,3.108)$ & -0.134 & -0.124\\
       & $\nu$   & 4.023 & 4.004 & 0.106 & 0.049 & $(3.816,\,4.231)$ & $(3.907,\,4.100)$ & \\
       & $\tau$  & 0.528 & 0.496 & 0.068 & 0.028 & $(0.395,\,0.662)$ & $(0.442,\,0.550)$ & \\
       & $\rho$  & -0.134 & -0.123 & 0.024 & 0.011 & $(-0.181,\,-0.088)$ & $(-0.144,\,-0.102)$ & \\
		\hline
	\end{tabular} 
    \caption{Simulations for log-logistic model ($\alpha = 1$, $\gamma =2$, $\lambda = 3$, $\nu = 4$, $\tau = 0.5$ with $\rho = -0.125$).} 
    \label{Table: Simulations for log-logistic model} 
\end{table}

\begin{table}[H] \centering 
	\begin{tabular}{@{\extracolsep{1pt}} cccccc} 
		\\[-1.8ex]\hline 
		\hline \\[-1.8ex] 
		n & Parameter & m.l.e & SE(m.l.e) & $95\%$CI (m.l.e)  & PC$^{\text{MH}}$\\ 
		\hline \\[-1.8ex] 
 & $\alpha$ & 1.009 & 0.070 & $(0.872,\,1.146)$ & \\
    1000 & $\gamma$ & 1.987 & 0.116 & $(1.759,\,2.215)$ & -0.032 \\
     & $\tau$   & 0.488 & 0.069 & $(0.352,\,0.624)$ &
\\[1ex]
\hline 
 & $\alpha$ & 1.003 & 0.041 & $(0.922,\,1.085)$ & \\
     5000 & $\gamma$ & 1.996 & 0.073 & $(1.853,\,2.140)$ &  -0.014 \\
     & $\tau$   & 0.496 & 0.040 & $(0.418,\,0.573)$ & 
\\[1ex]
\hline 
 & $\alpha$ & 1.002 & 0.031 & $(0.941,\,1.062)$ & \\
       10000 & $\gamma$ & 1.998 & 0.051 & $(1.898,\,2.098)$ & -0.009 \\
       & $\tau$   & 0.497 & 0.029 & $(0.440,\,0.555)$ & \\
		\hline
	\end{tabular} 
    \caption{Simulations for Half-Cauchy model ($\alpha = 1$, $\gamma =2$, $\tau = 0.5$).} 
    \label{Table: Simulations for Half-Cauchy model} 
\end{table}

\begin{table}[H] \centering 
	\begin{tabular}{@{\extracolsep{1pt}} cccccc} 
		\\[-1.8ex]\hline 
		\hline \\[-1.8ex] 
		n & Parameter & m.l.e & SE(m.l.e) & $95\%$CI (m.l.e)  & PC$^{\text{MH}}$\\ 
		\hline \\[-1.8ex] 
&$\alpha$ & 1.007 & 0.058 & $(0.894,\,1.120)$ \\ 
&$\gamma$ & 2.007 & 0.105 & $(1.800,\,2.213)$ \\ 
1000 &$\lambda$ & 3.016 & 0.146 & $(2.729,\,3.303)$ & -0.122 \\ 
&$\nu$     & 4.012 & 0.197 & $(3.626,\,4.398)$ \\ 
&$\tau$    & 0.497 & 0.069 & $(0.362,\,0.633)$ 
\\[1ex]
\hline 
    &$\alpha$ & 1.003 & 0.026 & $(0.951,\,1.054)$ \\ 
    &$\gamma$ & 2.002 & 0.045 & $(1.913,\,2.090)$ \\ 
    5000&$\lambda$ & 3.005 & 0.066 & $(2.876,\,3.134)$ & -0.122\\ 
    &$\nu$     & 4.004 & 0.085 & $(3.838,\,4.169)$ \\ 
    &$\tau$    & 0.499 & 0.032 & $(0.437,\,0.561)$ 
\\[1ex]
\hline 
    &$\alpha$ & 1.003 & 0.019 & $(0.966,\,1.039)$ \\ 
    &$\gamma$ & 2.000 & 0.033 & $(1.936,\,2.064)$ \\ 
    10000&$\lambda$ & 3.006 & 0.048 & $(2.913,\,3.099)$ & -0.123  \\ 
    &$\nu$     & 3.998 & 0.062 & $(3.877,\,4.120)$ \\ 
    &$\tau$    & 0.499 & 0.022 & $(0.455,\,0.543)$ \\ 
		\hline
	\end{tabular} 
    \caption{Simulations for gamma model ($\alpha = 1$, $\gamma =2$, $\lambda = 3$, $\nu = 4$ and $\tau = 0.5$).} 
    \label{Table: Simulations for Gamma model} 
\end{table}

\begin{figure}[H]
    \centering
\animategraphics[controls,autoplay,loop,width= 0.5\textwidth]{1}{sims/}{1}{6}
\caption{Bivariate densities of models given $\alpha = 1, \gamma = 2, \lambda =3, \nu = 4$ and $\tau = 0.5$.}
\label{fig: sims gif}
\end{figure}

\section{Two Small Applications}
The first dataset employs the song metric speechiness as the independent variable 
$X$ and song popularity as the dependent variable $Y$ conditioned on $\{X>x\}$, based on the most popular TikTok songs in 2020 available from \citep{sveta1512020tiktok}\footnote{Song popularity ($Y$) was originally reported on a $0-100$ scale. For analysis, these values were normalized to the unit interval by dividing by $100$, with songs assigned a score of $0$ removed from the dataset.}. The pearson correlation is $-0.322$; which ensures all models in this study may be utilised. Additionally, it is such that $S_1 < M_1^2$ and $S_2 < M_2^2$ for this dataset, hence m.m.e's for the Lomax model are excluded.\\

The second dataset employs the logarithm of the volatility index (VIX) as the independent variable $X$ and the logarithm of the price of gold futures (GC) as the dependent variable $Y$ conditioned on $\{X>x\}$, covering the period from 1 January 2020 to 17 October 2025\footnote{Rows containing missing values were removed - arising primarily from non-trading days (weekends and public holidays) and differences in trading calendars across the VIX and gold futures contracts.}. The data was obtained from \citep{yahoo2025finance}. The pearson correlation is $-0.311$; which ensures all models in this study may be utilised. Again, it is such that $S_1 < M_1^2$ and $S_2 < M_2^2$ for this dataset, hence m.m.e's for the Lomax model are excluded. \\

The corresponding m.m.e's, m.l.e's, and AIC values for both datasets are reported in Table \ref{table: tiktok vix}. An important observation is that the poorest-fitting models tend to be the exponential, Lomax, and half-Cauchy models. This outcome is unsurprising, as these distributions are characterised by non-increasing densities and are therefore unable to adequately capture unimodal, skewed marginals - a feature evident in both the $X$ and $Y$ variables across both datasets. This finding reinforces the necessity of conducting goodness-of-fit tests on the datasets' $X$ and $Y$ variables prior to fitting the joint models proposed in this study. In other words, given that our models assume $X\sim Family(\alpha, \lambda)$ and $Y \sim Family (\beta(0), \nu)$, it is prudent to first assess whether the $X$ and $Y$ variables from the datasets are indeed compatible with the assumed distributional families.\\

A clear example of poor fit arises from the Lomax model’s parameter estimates for the VIX/GC dataset: the m.l.e's collapse to boundary-degenerate parameter values. Other examples of poor model fit arise from the estimates of the dependence parameter $\tau$ for the exponential and half-Cauchy models on the VIX/GC dataset, where the estimate of $\tau$ approaches the boundary value of either $0$ or $1$. In these cases, the corresponding correlation estimates reach the models' theoretical correlation limits.\\

Contour plots of the estimated bivariate density functions for each model, applied to both datasets, are presented in Figure \ref{fig: application gif}, using m.l.e's from Table \ref{table: tiktok vix}.

\begin{table}[H]
    \centering
    \begin{tabular}{cccccccc}
       & & \multicolumn{3}{c}{TikTok} & \multicolumn{3}{c}{VIX/GC} \\
         \cmidrule(lr){3-5}  \cmidrule(lr){6-8} 
             \hline
     Model & Parameter & m.m.e & m.l.e  & AIC & m.m.e & m.l.e  & AIC \\
     \hline
    Exponential  &  $\alpha$ & 7.003 & 6.174 & -232.041 & 0.334 & 0.327 & 13847.215\\
    & $\gamma$ & 1.567 & 1.585 & &  0.131 & 0.098 &\\
    & $\tau$ & 0.090 & 0.384 & & 0.001 & 1.000 & \\
    & $\rho$ & -0.083 & -0.278 & & -0.001 & -0.500 &
              \\[1ex]
\hline 
Lomax & $\alpha$ &- & 1.477 & -231.381 & -&  0.000 & 13851.245 \\
& $\lambda$ & - & 4.565 & & - & $\infty$ &  \\
& $\gamma$ & - & 0.000 & & - & 0 & \\
& $\nu$ &- & $\infty$ & & - & $\infty$ \\
& $\tau$ & - & 0.484 & & - & 1.000\\
& $\rho$ & - & -0.301 & &- & -0.500 &
 \\[1ex]
\hline 
Weibull & $\alpha$ & 6.936 & 6.391 & \textbf{-715.220} & 0.319 & 0.319 & 1356.375\\
& $\lambda$ & 1.024 & 1.123 & & 11.565 & 8.682 &\\
& $\gamma$ & 1.426 & 1.459 & & 0.129 & 0.129 & \\
& $\nu$ & 4.297 & 4.449 & & 44.197 & 30.174 & \\
& $\tau$ & 0.406 & 0.425 & & 0.938 & 0.795\\
& $\rho$ & -0.322 & -0.311 &  &-0.311 & -0.091 &
\\[1ex]
\hline 
Log-logistic & $\alpha$ & 8.835 & 10.917 & -592.039 & 0.336 & 0.336 & \textbf{10.192}  \\
& $\lambda$ & 2.725 & 2.000 & & 17.416 & 15.379 & \\
& $\gamma$ & 1.618 & 1.531 & & 0.131 & 0.132 & \\
& $\nu$ & 7.179 & 5.189 & & 63.541 & 69.948 &\\
& $\tau$ & 0.855 & 0.456 & & 0.954 & 0.988 &  \\
& $\rho$ & -0.322 & 0.000 & & -0.311 & -0.460
\\[1ex]
\hline 
Half-Cauchy & $\alpha$ &- & 9.968 & -129.483 & - & 0.334 & 15032.367 \\
& $\gamma$ & - & 1.576 & & - & 0.114 & \\
& $\tau$ & - & 0.570 & & - & 1.000 &
 \\[1ex]
\hline 
Gamma & $\alpha$ &- & 9.432 & -517.098 & - & 6.670 & 6183.966\\
& $\lambda$ & - & 1.370 &  &- & 20.958 & \\
& $\gamma$ & - & 8.440 &  &-& 2.656 & \\
& $\nu$ &- & 5.383 &  &-& 19.505 & \\
& $\tau$ & - & 0.641 & & - & 0.232 & \\
\hline
    \end{tabular}
    \caption{m.m.e's, m.l.e's and AIC's for models using TikTok and VIX/GC datasets.}
    \label{table: tiktok vix}
\end{table}

\begin{figure}[H]
  \centering
  \begin{minipage}[t]{0.48\textwidth}
    \centering
\animategraphics[controls,autoplay,loop,width=\textwidth]{1}{Application/TikTok/}{1}{6}
    \caption{Contour plots of estimated bivariate densities of models on TikTok dataset (observations in black dots).}
  \end{minipage}
  \hfill
  \begin{minipage}[t]{0.48\textwidth}
\centering    \animategraphics[controls,autoplay,loop,width=\textwidth]{1}{Application/VIXGC/}{1}{6}
    \caption{Contour plots of estimated bivariate densities of models on VIX/GC dataset (observations in black dots).}
  \end{minipage}

  \label{fig: application gif}
\end{figure}

\section{Mixture Models}
This study employed models based on a single-family specification, wherein the survival function forms are restricted such that both marginal distributions follow the same distributional family but may differ in their respective parameters. Alternatively, a heterogeneous-family specification could be adopted, in which $X$ and $Y$ are drawn from distinct distributional families. Such an approach offers greater flexibility in modelling datasets whose marginals exhibit differing degrees of skewness or tail behaviour.

\appendix
\section*{Appendices}
\section{Joint Densities and Log-likelihoods} \label{app: joint lik}
\subsection{Exponential} 
We obtain the joint density function by differentiating the joint survival function \eqref{eq: exp joint survival}:
\[
f_{X,Y}(x,y) 
= \alpha \gamma \,\bigl(\gamma \tau y e^{\alpha \tau x} - \tau + 1 \bigr) 
   e^{\!\left(\alpha x(\tau - 1) - \gamma y e^{\alpha \tau x}\right)},
   \qquad x,y>0.
\]
where $\alpha, \gamma > 0 $ and $\tau \in [0, 1]$. The log-likelihood is thus:
\begin{align}
\ell(\alpha,\gamma,\tau)
&= n \log \alpha + n \log \gamma 
+ \sum_{i=1}^n \log\!\Big( \gamma \tau y_i e^{\alpha \tau x_i} - \tau + 1 \Big) \nonumber \\
&\quad + \alpha (\tau - 1)\sum_{i=1}^n x_i 
- \gamma \sum_{i=1}^n y_i e^{\alpha \tau x_i}. \nonumber
\end{align}
\subsection{Lomax}
We obtain the joint density function by differentiating the joint survival function \eqref{eq: Lomax joint survival}:
\begin{align*}
f_{X,Y}(x,y) 
&= \nu \,(1+\alpha x)^{-3\lambda - 1}\,
     \bigl(1+\gamma y (1+\alpha \tau x)^{\lambda}\bigr)^{-3\nu - 4} \\[6pt]
&\quad \times \Biggl[
   \frac{\alpha \gamma^{2} \lambda \tau y (\nu+1)\,(1+\alpha x)^{2\lambda+1}(1+\alpha \tau x)^{2\lambda}
        \,\bigl(1+\gamma y (1+\alpha \tau x)^{\lambda}\bigr)^{2\nu+2}}
        {1+\alpha \tau x} \\[6pt]
&\qquad - \frac{\alpha \gamma \lambda \tau \,(1+\alpha x)^{2\lambda+1}(1+\alpha \tau x)^{\lambda}
        \,\bigl(1+\gamma y (1+\alpha \tau x)^{\lambda}\bigr)^{2\nu+3}}
        {1+\alpha \tau x} \\[6pt]
&\qquad + \alpha \gamma \lambda \,\bigl((1+\alpha x)^{2}(1+\alpha \tau x)\bigr)^{\lambda}
        \,\bigl(1+\gamma y (1+\alpha \tau x)^{\lambda}\bigr)^{2\nu+3}
   \Biggr], \quad x,y >0,
\end{align*}
where $\alpha, \lambda, \nu,\gamma > 0 $ and $\tau \in [0, 1]$. The log-likelihood is thus:
\begin{align}
\ell(\alpha,\lambda,\nu,\gamma,\tau)
&= n \log \nu 
 - (3\lambda+1)\sum_{i=1}^n \log(1+\alpha x_i) \nonumber \\
&\quad - (3\nu+4)\sum_{i=1}^n \log\!\bigl(1+\gamma y_i (1+\alpha \tau x_i)^{\lambda}\bigr) \nonumber \\
&\quad + \sum_{i=1}^n 
   \log\!\Biggl\{
   \frac{\alpha \gamma^{2} \lambda \tau y_i (\nu+1)\,(1+\alpha x_i)^{2\lambda+1}(1+\alpha \tau x_i)^{2\lambda}}
        {1+\alpha \tau x_i}
        \bigl(1+\gamma y_i (1+\alpha \tau x_i)^{\lambda}\bigr)^{2\nu+2} \nonumber \\
&\qquad\qquad
   - \frac{\alpha \gamma \lambda \tau (1+\alpha x_i)^{2\lambda+1}(1+\alpha \tau x_i)^{\lambda}}
        {1+\alpha \tau x_i}
        \bigl(1+\gamma y_i (1+\alpha \tau x_i)^{\lambda}\bigr)^{2\nu+3} \nonumber \\
&\qquad\qquad
   + \alpha \gamma \lambda \,\bigl((1+\alpha x_i)^{2}(1+\alpha \tau x_i)\bigr)^{\lambda}
        \bigl(1+\gamma y_i (1+\alpha \tau x_i)^{\lambda}\bigr)^{2\nu+3}
   \Biggr\}. \nonumber
\end{align}
\subsection{Weibull}
We obtain the joint density function by differentiating the joint survival function \eqref{eq: Weibull joint survival}:
\[
f_{X, Y}(x,y) = \alpha^{\lambda} \gamma^{\nu} \lambda \nu x^{\lambda - 1} y^{\nu - 1}
\left( \tau^{\lambda} (\gamma y)^{\nu} e^{(\alpha \tau x)^\lambda}
- \tau^{\lambda} + 1 \right)
e^{(\alpha x)^\lambda(\tau^\lambda - 1)
- (\gamma y)^{\nu} e^{(\alpha\tau x)^{\lambda}}}, \quad x, y>0,
\]
where $\alpha, \lambda, \nu,\gamma > 0 $ and $\tau \in [0, 1]$. The log-likelihood is thus:
\begin{align}
\ell(\alpha,\lambda,\nu,\gamma,\tau)
&= n \lambda \log \alpha + n \nu \log \gamma 
   + n \log \lambda + n \log \nu  \nonumber\\
&\quad + (\lambda - 1) \sum_{i=1}^n \log x_i
   + (\nu - 1) \sum_{i=1}^n \log y_i \nonumber \\
&\quad + \sum_{i=1}^n 
   \log\!\Bigl(\tau^{\lambda} (\gamma y_i)^{\nu} e^{(\alpha \tau x_i)^\lambda}
   - \tau^{\lambda} + 1 \Bigr) \nonumber \\
&\quad + (\tau^{\lambda} - 1)\sum_{i=1}^n (\alpha x_i)^\lambda
   - \sum_{i=1}^n (\gamma y_i)^{\nu} e^{(\alpha \tau x_i)^\lambda}. \nonumber
\end{align}
\subsection{Log-logistic}
We obtain the joint density function by differentiating the joint survival function \eqref{eq: Log joint survival}:
\[
\begin{aligned}
f_{X,Y}(x,y)
&= \frac{\lambda \nu\,(\gamma y)^{\nu}}
{x\,y\,\bigl((\alpha x)^{\lambda}+1\bigr)^{2}
\Bigl( (\gamma y)^{\nu}\bigl((\alpha \tau x)^{\lambda}+1\bigr)+1 \Bigr)^{3}}
\\[4pt]
&\quad\times
\Biggl\{
(\alpha x)^{\lambda}\bigl((\alpha \tau x)^{\lambda}+1\bigr)
\Bigl( (\gamma y)^{\nu}\bigl((\alpha \tau x)^{\lambda}+1\bigr)+1 \Bigr)
\\
&\qquad\qquad
+ (\alpha \tau x)^{\lambda}\bigl((\alpha x)^{\lambda}+1\bigr)
\Bigl( (\gamma y)^{\nu}\bigl((\alpha \tau x)^{\lambda}+1\bigr)+1 \Bigr)
\\
&\qquad\qquad
- 2\,(\alpha \tau x)^{\lambda}\bigl((\alpha x)^{\lambda}+1\bigr)
\Biggr\},
\qquad x,y>0,
\end{aligned}
\]

where $\alpha, \lambda, \nu,\gamma > 0 $ and $\tau \in [0, 1]$. Thus the log-likelihood is:
\begin{align}
\ell(\alpha,\lambda,\nu,\gamma,\tau)
&= n\log\lambda + n\log\nu + n\,\nu\log\gamma
   + (\nu-1)\sum_{i=1}^n \log y_i
   - \sum_{i=1}^n \log x_i \nonumber \\
&\quad - 2 \sum_{i=1}^n \log\!\big( (\alpha x_i)^{\lambda} + 1 \big)
      - 3 \sum_{i=1}^n \log\!\big( (\gamma y_i)^{\nu}\big( (\alpha \tau x_i)^{\lambda} + 1 \big) + 1 \big) \nonumber \\
&\quad + \sum_{i=1}^n \log\!\Big\{
(\alpha x_i)^{\lambda}\big( (\alpha \tau x_i)^{\lambda} + 1 \big)
     \big( (\gamma y_i)^{\nu}\big( (\alpha \tau x_i)^{\lambda} + 1 \big) + 1 \big) \nonumber \\
&\hspace{3.2cm}
+ (\alpha \tau x_i)^{\lambda}\big( (\alpha x_i)^{\lambda} + 1 \big)
     \big( (\gamma y_i)^{\nu}\big( (\alpha \tau x_i)^{\lambda} + 1 \big) + 1 \big) \nonumber \\
&\hspace{3.2cm}
- 2\,(\alpha \tau x_i)^{\lambda}\big( (\alpha x_i)^{\lambda} + 1 \big)
\Big\}. \nonumber
\end{align}
\subsection{Half-Cauchy}
We obtain the joint density function by differentiating the joint survival function \eqref{eq: Cauchy joint survival}:

\begin{align}
f_{X,Y}(x,y)
&=  
\frac{4 \alpha \gamma}{
   \pi \left((\alpha x)^{2} + 1\right) 
   \left((\pi \gamma y)^{2} 
     + \left(2 \arctan{\left(\alpha \tau x \right)} - \pi\right)^{2}\right)^{2} 
   \left((\alpha \tau x)^{2} + 1\right)
} \times  \nonumber \\
\Bigg[
& - 2 (\pi \gamma)^{2} \tau y^{2} 
    \left((\alpha x)^{2} + 1\right) 
    \left(2 \arctan{\left(\alpha x \right)} - \pi\right) \nonumber \\
& + \tau \left((\alpha x)^{2} + 1\right) 
    \left((\pi \gamma y)^{2} 
        + \left(2 \arctan{\left(\alpha \tau x \right)} - \pi\right)^{2}\right) 
    \left(2 \arctan{\left(\alpha x \right)} - \pi\right) \nonumber \\
& - \left((\pi \gamma y)^{2} 
       + \left(2 \arctan{\left(\alpha \tau x \right)} - \pi\right)^{2}\right) 
    \left((\alpha \tau x)^{2} + 1\right) 
    \left(2 \arctan{\left(\alpha \tau x \right)} - \pi\right)
\Bigg] \nonumber
\end{align}
where $\alpha,\gamma > 0 $ and $\tau \in [0, 1]$. Thus the log-likelihood is:
\begin{align}
\ell(\alpha,\gamma,\tau) 
= \sum_{i=1}^n \Bigg\{ 
& \log(4 \alpha \gamma) 
- \log \pi  \nonumber\\
& - \log\!\left((\alpha x_i)^{2} + 1\right) \nonumber \\
& - \log\!\left( (\pi \gamma y_i)^{2} 
      + \left(2 \arctan(\alpha \tau x_i) - \pi\right)^{2} \right)^{2} \nonumber \\
& - \log\!\left((\alpha \tau x_i)^{2} + 1\right) \nonumber\\
& + \log \Bigg[
   - 2 (\pi \gamma)^{2} \tau y_i^{2} 
       \left((\alpha x_i)^{2} + 1\right) 
       \left(2 \arctan(\alpha x_i) - \pi\right) \nonumber\\
& \quad + \tau \left((\alpha x_i)^{2} + 1\right) 
       \left((\pi \gamma y_i)^{2} 
           + \left(2 \arctan(\alpha \tau x_i) - \pi\right)^{2}\right) 
       \left(2 \arctan(\alpha x_i) - \pi\right) \nonumber\\
& \quad - \left((\pi \gamma y_i)^{2} 
           + \left(2 \arctan(\alpha \tau x_i) - \pi\right)^{2}\right) 
       \left((\alpha \tau x_i)^{2} + 1\right) 
       \left(2 \arctan(\alpha \tau x_i) - \pi\right)\nonumber
\Bigg] 
\Bigg\}.
\end{align}

\subsection{Gamma}
We obtain the joint density function by differentiating the joint survival function \eqref{eq: Gamma joint survival}:
\begin{align}
f_{X,Y}(x,y) 
&= \frac{\alpha^{\lambda}\,\gamma^{\nu}\,x^{\lambda-1}\,y^{\nu}}
{\lambda^{1+\tfrac{1}{\nu}}\,y\,\Gamma(\lambda)\,\Gamma(\nu+1)\,
   \Gamma(\lambda,\alpha\tau x)^{2}} \;
e^{-\alpha x - \alpha \tau x 
- \gamma\,\lambda^{-\tfrac{1}{\nu}}\,y\,
   \Gamma(\lambda,\alpha\tau x)^{-\tfrac{1}{\nu}}\,
   \Gamma(\lambda+1)^{\tfrac{1}{\nu}}} \nonumber \\[1em]
&\quad\times
\Biggl[
  \;\gamma \tau^{\lambda} y \,\Gamma(\lambda,\alpha\tau x)^{-\tfrac{1}{\nu}} 
    e^{\alpha x}\,\Gamma(\lambda+1)^{1+\tfrac{1}{\nu}}
    \Gamma(\lambda,\alpha x) \nonumber
\\[0.7em]
&\qquad
 - \lambda^{\tfrac{1}{\nu}}\,\nu \,\tau^{\lambda}\,e^{\alpha x}\,
   \Gamma(\lambda)\,\Gamma(\lambda+1)
 + \lambda^{1+\tfrac{1}{\nu}}\,\nu \,\tau^{\lambda}\,e^{\alpha x}\,
   \Gamma(\lambda)\,\gamma^*(\lambda,\alpha x)
 + \lambda^{1+\tfrac{1}{\nu}}\,\nu\,e^{\alpha\tau x}\,
   \Gamma(\lambda)\,\Gamma(\lambda,\alpha\tau x)
\Biggr], \quad x, y >0, \nonumber
\end{align}

where $\alpha,\gamma > 0 $ and $\tau \in [0, 1]$. Thus the log-likelihood is:
\[
\begin{aligned}
\ell(\alpha,\gamma,\lambda,\nu,\tau)
&= \sum_{i=1}^n \Bigg\{
    \lambda \log \alpha + \nu \log \gamma
   + (\lambda-1)\log x_i + (\nu-1)\log y_i \\[0.4em]
&\quad - \Bigl(1+\tfrac{1}{\nu}\Bigr)\log \lambda
   - \log \Gamma(\lambda) - \log \Gamma(\nu+1)
   - 2 \log \Gamma(\lambda,\alpha\tau x_i) \\[0.4em]
&\quad - \alpha x_i - \alpha \tau x_i
      - \gamma\,\lambda^{-\tfrac{1}{\nu}}\,y_i\,
        \Gamma(\lambda,\alpha\tau x_i)^{-\tfrac{1}{\nu}}\,
        \Gamma(\lambda+1)^{\tfrac{1}{\nu}} \\[0.6em]
&\quad + \log\Bigl[
\;\gamma \tau^{\lambda} y_i \,\Gamma(\lambda,\alpha\tau x_i)^{-\tfrac{1}{\nu}}
   e^{\alpha x_i}\,\Gamma(\lambda+1)^{\,1+\tfrac{1}{\nu}}\,
   \Gamma(\lambda,\alpha x_i) \\[-0.2em]
&\qquad\qquad\;\;
 - \lambda^{\tfrac{1}{\nu}}\,\nu \,\tau^{\lambda}\,e^{\alpha x_i}\,
   \Gamma(\lambda)\,\Gamma(\lambda+1)
 + \lambda^{1+\tfrac{1}{\nu}}\,\nu \,\tau^{\lambda}\,e^{\alpha x_i}\,
   \Gamma(\lambda)\,\gamma^*(\lambda,\alpha x_i) \\
&\qquad\qquad\;\;
 + \lambda^{1+\tfrac{1}{\nu}}\,\nu\,e^{\alpha\tau x_i}\,
   \Gamma(\lambda)\,\Gamma(\lambda,\alpha\tau x_i)
\Bigr]\;\Bigg\}.
\end{aligned}
\]

\section{Method of Moment Estimators} \label{app: mmes}
Suppose data in the form of $\textbf{X}^{(1)}, \textbf{X}^{(2)}, ..., \textbf{X}^{(n)}$ (where for each $i = 1, \ldots, n$, $\textbf{X}^{(i)} = (X_{1i}, X_{2i})^{T} )$ are obtained, which are independent and identically distributed according to \ref{eq: general marginal} and \ref{eq: general cond}. If $M_{1}, M_2 > 0$, consistent asymptotically normal method of moment estimates (m.m.e's) are acquired.\\
Define:
\begin{eqnarray}
	M_{1} &=& \frac{1}{n}\sum_{i = 1}^{n} x_{1i}, \nonumber\\
	M_{2} &=& \frac{1}{n}\sum_{i = 1}^{n} x_{2i}, \nonumber\\
	S_{12}&=& \frac{1}{n}\sum_{i = 1}^{n}(x_{1i} - M_{1} ) ( x_{2i} - M_{2} ),\nonumber \\
    S_{1} &=&  \frac{1}{n}\sum_{i = 1}^{n}(x_{2i} - M_{2} )^2, \nonumber\\
	S_{2} &=&  \frac{1}{n}\sum_{i = 1}^{n}(x_{2i} - M_{2} )^2. \nonumber
\end{eqnarray}
We denote the method of moment estimators for $\alpha, \lambda, \gamma, \nu$ and $\tau$ as $\tilde{\alpha}, \tilde{\lambda}, \tilde{\gamma}, \tilde{\nu}$ and $\tilde{\tau}$ respectively.
\subsection{Exponential}
Using \eqref{eq: exp ex} and \eqref{eq: exp ey}, we obtain:
\begin{align}
    \tilde{\alpha} &= \frac{1}{M_1}, \nonumber\\
    \tilde{\lambda} & = \frac{1}{M_2}. \nonumber
\end{align}
And using \eqref{eq: exp cov}:
\begin{align}
    \tilde{\tau} &= -\frac{1}{\frac{M_1M_2}{S_{12}}+1}. \nonumber
\end{align}
\subsection{Lomax}
Using \eqref{eq: Lomax ex} and \eqref{eq: Lomax varx}, we obtain:
\begin{align}
    \tilde{\lambda} & = \frac{2S_1}{S_1 - M_1^2}, \nonumber \\
    \tilde{\alpha} &= \frac{1}{M_1(\tilde{\lambda}-1 )}. \nonumber
\end{align}
Likewise using \eqref{eq: Lomax ey} and \eqref{eq: Lomax vary}, we obtain:
\begin{align}
    \tilde{\nu} & = \frac{2S_2}{S_2 - M_2^2} ,\nonumber \\
    \tilde{\gamma} &= \frac{1}{M_2(\tilde{\nu}-1 )}. \nonumber
\end{align}
And using \eqref{eq: Lomax cov} we may solve for $\tilde{\tau}$ numerically. Additionally, to ensure existence of positive m.m.e's $\tilde{\lambda}$ and $\tilde{\nu}$, it must be such that $S_1 > M_1^2$ and $S_2 > M_2^2$.
\subsection{Weibull}
Solve for $\tilde{\lambda}$ and $\tilde{\alpha}$ numerically using \eqref{eq: Weibull ex} and $\eqref{eq: Weibull varx}$. Likewise, solve for $\tilde{\nu}$ and $\tilde{\gamma}$ numerically using \eqref{eq: Weibull ey} and $\eqref{eq: Weibull vary}$. Finally, solve for $\tilde{\tau}$ numerically using \eqref{eq: Weibull cov}.
\subsection{Log-Logistic}
Solve for $\tilde{\lambda}$ and $\tilde{\alpha}$ numerically using \eqref{eq: Log ex} and $\eqref{eq: Log varx}$. Likewise, solve for $\tilde{\nu}$ and $\tilde{\gamma}$ numerically using \eqref{eq: Log ey} and $\eqref{eq: Log vary}$. Finally, solve for $\tilde{\tau}$ numerically using \eqref{eq: Log cov}.

\section{Theorems} \label{app: thms}

\begin{theorem} \label{thm: ifr hazard}
Let $h_0:(0,\infty)\to(0,\infty)$ be a non-decreasing (i.e., increasing failure rate) hazard function. Define:
\[
r(x,\tau) = \tau\,\frac{h_0(\tau x)}{h_0(x)}, \qquad x>0.
\]
Then:
\[
0 \le r(x,\tau) \le 1 \quad \forall x>0
\quad \Longleftrightarrow \quad
0 \le \tau \le 1.
\]
\end{theorem}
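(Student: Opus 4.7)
The plan is to derive Theorem~\ref{thm: ifr hazard} as a direct corollary of Theorem~\ref{thm: ABC}, by verifying that the IFR hypothesis forces condition $C$ of that theorem and by ruling out the exceptional $\phi$-constant case appearing in part~(3). Since $h_0$ is non-decreasing and strictly positive on $(0,\infty)$, the function $\phi(x):=x\,h_0(x)$ is the pointwise product of the strictly increasing positive function $x\mapsto x$ with the non-decreasing positive function $h_0$, and hence is non-decreasing (in fact strictly increasing) on $(0,\infty)$. So condition $C$ is automatic under the IFR assumption.

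For the forward implication $\tau\in[0,1]\Rightarrow 0\le r(x,\tau)\le 1$, I would treat $\tau\in(0,1]$ by invoking part~(2) of Theorem~\ref{thm: ABC}: condition $C$ together with $B$ delivers $A$ immediately. The endpoint $\tau=0$ is covered by the convention introduced in the first remark, namely $r(x,0):=\lim_{\tau\to 0^+}r(x,\tau)=0$, which trivially satisfies the claimed bounds.

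For the converse implication, assume $A$ holds. The lower bound $r(x,\tau)\ge 0$ combined with $h_0>0$ forces $\tau\ge 0$, and the upper bound combined with $C$ yields $\tau\le 1$ via part~(3), \emph{unless} $\phi$ is identically constant. The only step requiring any care is excluding this exception: if $\phi(x)\equiv K>0$, then $h_0(x)=K/x$, which is strictly decreasing on $(0,\infty)$ and directly contradicts the hypothesis that $h_0$ is non-decreasing. Hence the $\phi$-constant exception cannot arise here, and the equivalence follows.

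The main (and essentially only) obstacle is precisely the exclusion of the degenerate case where $\phi$ is constant, which fortunately falls out in one line from the IFR assumption. Everything else is a direct appeal to Theorem~\ref{thm: ABC}, and no calculation beyond the elementary observation that a product of non-decreasing positive functions is non-decreasing is needed.
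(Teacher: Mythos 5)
Your proposal is correct, but it takes a genuinely different route from the paper's own proof. The paper proves Theorem~\ref{thm: ifr hazard} directly and self-containedly in the appendix: for sufficiency, $\tau x \le x$ and monotonicity of $h_0$ give $r(x,\tau)=\tau\,h_0(\tau x)/h_0(x)\le\tau\le 1$; for necessity, $\tau>1$ would force $r(x,\tau)\ge\tau>1$. No appeal to $\phi(x)=x\,h_0(x)$ or to Theorem~\ref{thm: ABC} is made there. What you do instead is exactly what the paper's second remark sketches informally (``Theorem~\ref{thm: ABC} recovers Theorem~\ref{thm: ifr hazard}''): verify condition $C$ by noting $\phi$ is a product of a strictly increasing positive function and a non-decreasing positive function, invoke parts (2) and (3), and rule out the constant-$\phi$ exception since $h_0(x)=K/x$ would be strictly decreasing. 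This is legitimate and non-circular, since the paper's proof of Theorem~\ref{thm: ABC} does not rely on Theorem~\ref{thm: ifr hazard}. The trade-off: the paper's direct argument is shorter and uses only the IFR hypothesis, while yours unifies the result under the general framework and avoids duplicating the monotonicity argument; the small price is the extra step of excluding the degenerate case, which you dispatch correctly. A minor point in your favour: you handle the endpoint $\tau=0$ explicitly via the limit convention $r(x,0):=\lim_{\tau\to 0^+}r(x,\tau)=0$, whereas the paper's direct proof writes $h_0(\tau x)$ at $\tau=0$, i.e.\ $h_0(0)$, which sits outside the stated domain $(0,\infty)$.
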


\begin{proof}
(\emph{Sufficiency.}) Suppose $0 \le \tau \le 1$. Then $\tau x \le x$, and since $h_0$ is non-decreasing,
it follows that $h_0(\tau x) \le h_0(x)$. Consequently,
\[
0 \le r(x,\tau) = \tau\,\frac{h_0(\tau x)}{h_0(x)} \le \tau \le 1,
\]
for all $x>0$.

(\emph{Necessity.}) Conversely, assume that $0 \le r(x,\tau) \le 1$ for all $x>0$. 
Because $h_0(x)>0$, the lower bound $r(x,\tau)\ge 0$ implies $\tau \ge 0$. 
If $\tau>1$, then $\tau x > x$ and, by monotonicity, $h_0(\tau x) \ge h_0(x)$, so:
\[
r(x,\tau) = \tau\,\frac{h_0(\tau x)}{h_0(x)} \ge \tau > 1,
\]
contradicting the upper bound. Hence $\tau \le 1$. 
Combining both parts yields $0 \le \tau \le 1$.
\end{proof}

\begin{theorem}  \label{thm: factor hazard}
Let the hazard function, $h_0:(0,\infty)\to(0,\infty)$, admit the factorization:
\[
h_0(x)=C\,x^{p-1}\,g(x),\qquad C>0,\; p\ge 0,
\]
where $g:(0,\infty)\to(0,\infty)$ is non-decreasing. For $\tau>0$ define:
\[
r(x,\tau)\;=\;\tau\,\frac{h_0(\tau x)}{h_0(x)}.
\]
Then, for every $x>0$,
\[
0\le \tau\le 1
\quad\Longleftrightarrow\quad
0\le r(x,\tau)\le 1.
\]
\end{theorem}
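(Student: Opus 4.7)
The plan is to reduce Theorem \ref{thm: factor hazard} to Theorem \ref{thm: ABC}, whose parts (2) and (3) already deliver exactly the equivalence $0 \le \tau \le 1 \Leftrightarrow 0 \le r(x,\tau) \le 1$ whenever $\phi(x) := xh_0(x)$ is non-decreasing on $(0,\infty)$. The only real work is therefore to verify, from the factorization, that $\phi$ has this monotonicity property.

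Substituting $h_0(x) = Cx^{p-1}g(x)$ gives $\phi(x) = Cx^p g(x)$. Since $p \ge 0$, the map $x \mapsto x^p$ is non-decreasing and strictly positive on $(0,\infty)$; by hypothesis $g$ is likewise non-decreasing and strictly positive. The product of two non-decreasing positive functions is non-decreasing, and multiplying by $C > 0$ preserves this, so condition $C$ of Theorem \ref{thm: ABC} holds. The forward implication $0 \le \tau \le 1 \Rightarrow 0 \le r(x,\tau) \le 1$ then follows immediately from part (2), with the boundary case $\tau = 0$ handled by the continuity convention $r(x,0) := 0$ noted in the preceding remark. The reverse implication is delivered by part (3), which yields $\tau \in [0,1]$ provided $\phi$ is not identically constant.

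The one subtle point is the exceptional case in part (3). From $\phi(x) = Cx^p g(x)$, constancy would force $p = 0$ together with $g$ identically constant, giving $h_0(x) \propto 1/x$. Such a hazard is not integrable near the origin, so $\bar{F}_0(x) = \exp\!\left(-\int_0^x h_0(s)\,ds\right)$ fails to define a valid survival function on $(0,\infty)$; this degenerate case therefore lies outside the class of hazard functions under consideration and can be excluded without loss. After this mild exclusion, Theorem \ref{thm: ABC}(3) applies without caveat and the equivalence is established. The main (and in truth only) obstacle is making this exclusion of the $1/x$ hazard rigorous; the rest is a direct appeal to the general machinery already proved.
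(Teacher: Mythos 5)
Your proof is correct, but it takes a genuinely different route from the paper's. The paper proves Theorem~\ref{thm: factor hazard} directly in Appendix~\ref{app: thms}: it computes $r(x,\tau)=\tau\,\frac{C(\tau x)^{p-1}g(\tau x)}{Cx^{p-1}g(x)}=\tau^{p}\,\frac{g(\tau x)}{g(x)}$ and argues both implications from the monotonicity of $g$ and of $\tau\mapsto\tau^{p}$, never invoking Theorem~\ref{thm: ABC}. Your reduction via the monotonicity of $\phi(x)=Cx^{p}g(x)$ is precisely the observation the paper records in the remark following Theorem~\ref{thm: ABC} (``Theorem~\ref{thm: ABC} recovers Theorem~\ref{thm: factor hazard}''), but it is not how the appendix proof is written. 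The trade-off: the direct computation is self-contained and yields the sharper intermediate bound $r(x,\tau)\le\tau^{p}$, whereas your reduction reuses the general machinery at the cost of having to confront the exceptional case in part (3). On that point you are in fact more careful than the paper: when $p=0$ and $g$ is constant one has $r(x,\tau)\equiv 1$, so the necessity direction genuinely fails, and the appendix proof's assertion that $\tau^{p}>1$ for $\tau>1$ is false for $p=0$ --- it silently contains the same gap you patch. Your exclusion of $h_0(x)\propto 1/x$ on the grounds that it is not integrable at the origin (hence not the hazard of a proper distribution on $(0,\infty)$) is a sensible way to close that gap, though strictly speaking it imports a hypothesis (that $h_0$ arises from a valid survival function) that the theorem statement does not make explicit.
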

\begin{proof}
Under the stated factorization,
\[
r(x,\tau)=\tau\,\frac{C(\tau x)^{p-1}g(\tau x)}{C x^{p-1}g(x)}
=\tau^{p}\,\frac{g(\tau x)}{g(x)}.
\]
(\emph{Sufficiency.}) If $0\le\tau\le 1$, then $\tau x\le x$ and, since $g$ is non-decreasing,
$g(\tau x)\le g(x)$. Hence:
\[
0\le r(x,\tau)=\tau^{p}\frac{g(\tau x)}{g(x)}\le \tau^{p}\le 1.
\]

(\emph{Necessity.}) Suppose $0\le r(x,\tau)\le 1$ for all $x>0$. Because $h_0>0$ we have $\tau\ge 0$.
If $\tau>1$, then $\tau x>x$ and $g(\tau x)\ge g(x)$, whence:
\[
r(x,\tau)=\tau^{p}\frac{g(\tau x)}{g(x)}\ge \tau^{p}>1,
\]
a contradiction. Thus $\tau\le 1$. Combining, $0\le\tau\le 1$.
\end{proof}
\begin{theorem}\label{thm: half-cauchy}
Let $\alpha>0$. For the half-Cauchy model, the function
\[
\phi(x)\;=\;x\,h_0(x)\;=\;x\,\frac{\alpha}{\bigl(1+(\alpha x)^2\bigr)\bigl(\frac{\pi}{2}-\arctan(\alpha x)\bigr)}
\]
is strictly increasing on $x\in(0,\infty)$.
\end{theorem}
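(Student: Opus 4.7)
The plan is to reduce the problem to a single-variable inequality by substituting $u=\alpha x\in(0,\infty)$, which removes $\alpha$ and leaves us to prove that
\[
\psi(u)\;=\;\frac{u}{(1+u^2)\bigl(\tfrac{\pi}{2}-\arctan u\bigr)}
\]
is strictly increasing on $(0,\infty)$. Writing $A(u):=\tfrac{\pi}{2}-\arctan u>0$ (with $A'(u)=-1/(1+u^2)$) and differentiating $\log\psi$, I would obtain
\[
\frac{\psi'(u)}{\psi(u)}
=\frac{1}{u}-\frac{2u}{1+u^2}+\frac{1}{(1+u^2)A(u)}
=\frac{(1-u^2)\,A(u)+u}{u(1+u^2)\,A(u)}.
\]
Since $u,1+u^2,A(u)>0$ on $(0,\infty)$, the sign of $\psi'(u)$ equals that of
\[
g(u)\;:=\;(1-u^2)\bigl(\tfrac{\pi}{2}-\arctan u\bigr)+u,
\]
so the task reduces to establishing $g(u)>0$ for all $u>0$.

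For $u\in(0,1]$ the inequality is immediate: both $(1-u^2)A(u)\ge 0$ and $u>0$, so $g(u)>0$. The only real work is the range $u>1$, where $1-u^2<0$. Here I would exploit the identity $\tfrac{\pi}{2}-\arctan u=\arctan(1/u)$ and substitute $v=1/u\in(0,1)$, which converts the inequality $g(u)>0$ (after dividing by $u>0$) into
\[
\frac{v}{1-v^2}\;>\;\arctan v,\qquad v\in(0,1).
\]
To prove this, I would define $f(v)=\frac{v}{1-v^2}-\arctan v$, observe $f(0)=0$, and compute
\[
f'(v)=\frac{1+v^2}{(1-v^2)^2}-\frac{1}{1+v^2}
=\frac{(1+v^2)^2-(1-v^2)^2}{(1-v^2)^2(1+v^2)}
=\frac{4v^2}{(1-v^2)^2(1+v^2)}>0
\]
for $v\in(0,1)$, so $f(v)>0$ on that interval, which is exactly what is needed.

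The main obstacle is the $u>1$ case, because there the two terms in $g(u)$ have opposite signs and one cannot simply bound $A(u)$ crudely; the identity $\tfrac{\pi}{2}-\arctan u=\arctan(1/u)$ is what makes the inequality tractable, transforming a difference near the ``boundary'' $u\to\infty$ (where $A(u)\to 0$) into the clean monotonicity statement $f(v)>f(0)=0$. Everything else is routine differentiation and algebraic simplification.
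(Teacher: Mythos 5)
Your proof is correct and follows essentially the same route as the paper's: both reduce the claim to the positivity of $(1-u^2)\bigl(\tfrac{\pi}{2}-\arctan u\bigr)+u$, split at $u=1$, and handle $u>1$ via the identity $\tfrac{\pi}{2}-\arctan u=\arctan(1/u)$. The only (minor) difference is at the very end, where the paper closes with the cruder bound $\arctan w<w$ (which already gives $\arctan v<v\le v/(1-v^2)$), while you prove the sharper inequality $\arctan v< v/(1-v^2)$ by a monotonicity argument.
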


\begin{proof}
It suffices to show $\phi'(x)>0$ for $x>0$. Differentiating and simplifying yields
\[
\phi'(x)
=
\frac{2 \alpha \left(2 \alpha^{2} x^{2} \bigl(2 \arctan(\alpha x) - \pi\bigr) + 2 \alpha x - \left(\alpha^{2} x^{2} + 1\right) \bigl(2 \arctan(\alpha x) - \pi\bigr)\right)}{\left(\alpha^{2} x^{2} + 1\right)^{2} \left(2 \arctan(\alpha x) - \pi\right)^{2}}.
\]
For $x>0$ the denominator is strictly positive because $\left(\alpha^{2} x^{2}+1\right)^2>0$ and $\left(2\arctan(\alpha x)-\pi\right)^2>0$ (since $0<\arctan(\alpha x)<\tfrac{\pi}{2}$, we have $2\arctan(\alpha x)-\pi<0$). Thus the sign of $\phi'(x)$ is determined by
\[
f(x):=2 \alpha^{2} x^{2} \bigl(2 \arctan(\alpha x) - \pi\bigr) + 2 \alpha x - \left(\alpha^{2} x^{2} + 1\right) \bigl(2 \arctan(\alpha x) - \pi\bigr).
\]
Writing $t=\alpha x>0$ gives
\[
f(t)=(t^{2}-1)\bigl(2\arctan t-\pi\bigr)+2t.
\]
If $0<t<1$, then $t^{2}-1<0$ and $2\arctan t-\pi<0$, so their product is positive; adding $2t>0$ implies $f(t)>0$.
If $t=1$, then $f(1)=2>0$.
If $t>1$, use $\arctan t+\arctan(1/t)=\tfrac{\pi}{2}$ to write
\[
f(t)=2t-(t^{2}-1)\bigl(\pi-2\arctan t\bigr)=2t-2(t^{2}-1)\arctan(1/t).
\]
Since $\arctan u<u$ for all $u>0$, with $u=1/t$ we obtain
\[
f(t)>2t-\frac{2(t^{2}-1)}{t}=\frac{2}{t}>0.
\]
Hence $f(t)>0$ for all $t>0$, so $\phi'(x)>0$ for all $x\in(0,\infty)$. Therefore $\phi(x)$ is strictly increasing on $(0,\infty)$.
\end{proof}

\begin{theorem}\label{thm: gamma}
Let $\alpha>0$. For the gamma model, the function
\[
\phi(x)\;=\;x\,h_0(x)\;= x\frac{\alpha(\alpha x)^{\lambda - 1}e^{-\alpha x}}{\Gamma(\lambda, \alpha x)}
\]
is strictly increasing on $x\in(0,\infty)$.
\end{theorem}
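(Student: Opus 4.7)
The first move is a change of variables: writing $t = \alpha x$, the monotonicity of $\phi$ in $x$ is equivalent to the monotonicity in $t$ of
\[
\psi(t) \;=\; \frac{t^{\lambda}e^{-t}}{\Gamma(\lambda,t)}, \qquad t>0,
\]
so the $\alpha$-dependence disappears and I can work with a single-variable function. I would then show $\psi'(t)>0$ for all $t>0$, which immediately gives $\phi'(x)>0$.

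To compute $\psi'(t)$, I would use the quotient rule together with $\frac{d}{dt}\Gamma(\lambda,t) = -t^{\lambda-1}e^{-t}$ (fundamental theorem of calculus) and $\frac{d}{dt}(t^{\lambda}e^{-t}) = t^{\lambda-1}e^{-t}(\lambda - t)$. Pulling out the manifestly positive factor $t^{\lambda-1}e^{-t}/\Gamma(\lambda,t)^{2}$ reduces the sign question to whether
\[
g(t) \;:=\; (\lambda - t)\,\Gamma(\lambda,t) + t^{\lambda}e^{-t}
\]
is strictly positive on $(0,\infty)$. Here the term $(\lambda-t)$ changes sign at $t=\lambda$, so a direct estimate will not work; this is the main obstacle.

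The key trick I would use is the standard recurrence for the upper incomplete gamma function, $\Gamma(\lambda+1,t) = \lambda\,\Gamma(\lambda,t) + t^{\lambda}e^{-t}$. Substituting $t^{\lambda}e^{-t} = \Gamma(\lambda+1,t) - \lambda\,\Gamma(\lambda,t)$ into $g(t)$ collapses the awkward $(\lambda-t)$ factor and yields the clean identity
\[
g(t) \;=\; \Gamma(\lambda+1,t) - t\,\Gamma(\lambda,t) \;=\; \int_{t}^{\infty} (s-t)\,s^{\lambda-1}e^{-s}\,ds.
\]
On the domain of integration $s>t$, the integrand $(s-t)s^{\lambda-1}e^{-s}$ is strictly positive (for $\lambda>0$) on a set of positive Lebesgue measure, so $g(t)>0$ for every $t>0$.

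Combining these steps, $\psi'(t)>0$ on $(0,\infty)$, hence $\phi$ is strictly increasing on $(0,\infty)$, completing the proof. With this in hand, the remark following Theorem \ref{thm: ABC} (or a direct appeal to Theorem \ref{thm: ABC}) guarantees that the bound \eqref{eq: tau bounds} restricts $\tau$ to $[0,1]$, as asserted in the gamma-model section.
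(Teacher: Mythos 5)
Your proof is correct and follows essentially the same route as the paper's: both reduce the problem (via the substitution $t=\alpha x$) to showing $(\lambda-t)\Gamma(\lambda,t)+t^{\lambda}e^{-t}>0$, apply the recurrence $\Gamma(\lambda+1,t)=\lambda\Gamma(\lambda,t)+t^{\lambda}e^{-t}$ to collapse this to $\Gamma(\lambda+1,t)-t\,\Gamma(\lambda,t)$, and conclude positivity from the integral representation $\int_{t}^{\infty}(s-t)s^{\lambda-1}e^{-s}\,ds$. The only difference is cosmetic: you change variables before differentiating, which slightly streamlines the algebra relative to the paper's direct differentiation of $\phi(x)$.
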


\begin{proof}
It suffices to show $\phi'(x)>0$ for $x>0$. Differentiating and simplifying yields
\[
\phi'(x)
=\frac{\alpha\Big(\alpha x(\alpha x)^{2\lambda-2}+(\alpha x)^{\lambda-1}\big(-\alpha x+\lambda\big)e^{\alpha x}\Gamma(\lambda,\alpha x)\Big)e^{-2\alpha x}}
{\Gamma(\lambda,\alpha x)^2}
\]
Let $z=\alpha x>0$. Then we can rewrite as
\[
\phi'(x)
=\frac{\alpha\,z^{\lambda-1}e^{-2z}}{\Gamma(\lambda,z)^2}
\Big[z^{\lambda}+(\lambda-z)e^{z}\Gamma(\lambda,z)\Big].
\]
Since $\alpha>0$, $z^{\lambda-1}>0$, $e^{-2z}>0$, and $\Gamma(\lambda,z)>0$ for all $\lambda,z>0$, it suffices to show
\[
B(z):=z^{\lambda}+(\lambda-z)e^{z}\Gamma(\lambda,z)>0.
\]
Using the recurrence relation for the upper incomplete gamma function,
\[
\Gamma(\lambda+1,z)=\lambda\Gamma(\lambda,z)+z^{\lambda}e^{-z},
\]
we obtain
\[
e^{z}\Gamma(\lambda+1,z)=\lambda e^{z}\Gamma(\lambda,z)+z^{\lambda}.
\]
Hence
\[
B(z)=z^{\lambda}+(\lambda-z)e^{z}\Gamma(\lambda,z)
=e^{z}\Gamma(\lambda+1,z)-z e^{z}\Gamma(\lambda,z)
=e^{z}\big[\Gamma(\lambda+1,z)-z\Gamma(\lambda,z)\big].
\]

\medskip
By definition of the upper incomplete gamma function,
\[
\Gamma(\lambda+1,z)-z\Gamma(\lambda,z)
=\int_{z}^{\infty}\!\!\big(t^{\lambda}-z\,t^{\lambda-1}\big)e^{-t}\,dt
=\int_{z}^{\infty}(t-z)\,t^{\lambda-1}e^{-t}\,dt.
\]
The integrand $(t-z)t^{\lambda-1}e^{-t}$ is nonnegative on $[z,\infty)$ 
and positive almost everywhere on that interval; 
therefore the integral is strictly positive, 
implying $B(z)=e^{z}\big[\Gamma(\lambda+1,z)-z\Gamma(\lambda,z)\big]>0$. Since every factor in the expression for $\phi'(x)$ is positive for $\alpha>0$, $\lambda>0$, and $x>0$, we conclude that $\phi(x)$ is strictly increasing for $x>0$.
\end{proof}

\bibliographystyle{unsrtnat}

\bibliography{main}

@article{arnold2020bivariate,
  title={On bivariate pseudo-exponential distributions},
  author={Arnold, Barry C and Arvanitis, Matthew A},
  journal={Journal of Applied Statistics},
  volume={47},
  number={13-15},
  pages={2299--2311},
  year={2020},
  publisher={Taylor \& Francis}
}

@misc{sveta1512020tiktok,
  author       = {sveta151},
  title        = {TikTok Popular Songs 2020},
  howpublished = {\url{https://www.kaggle.com/datasets/sveta151/tiktok-popular-songs-2020}},
  year         = {2020},
  note         = {Kaggle dataset}
}

@misc{yahoo2025finance,
  author       = {{Yahoo Finance}},
  title        = {Historical Data: VIX Index and Gold Futures},
  howpublished = {\url{https://finance.yahoo.com/}},
  note         = {Accessed via the \texttt{quantmod} R package on 2025-10-17},
  year         = {2025}
}

\end{document}